\documentclass[letterpaper,onecolumn,allowtoday,unpublished]{quantumarticle}
\pdfoutput=1
\usepackage[utf8]{inputenc}
\usepackage[english]{babel}
\usepackage[T1]{fontenc}
\usepackage[numbers,sort&compress]{natbib}
\usepackage{braket}
\usepackage{amsmath}
\usepackage{amssymb} 
\usepackage{tikz}
\usetikzlibrary{quantikz}
\usepackage{complexity}
\usepackage{comment}

\usepackage{algorithm}
\usepackage[noend]{algpseudocode}
\algrenewcommand\algorithmicrequire{\textbf{Input:}}
\algrenewcommand\algorithmicensure{\textbf{Output:}}
\usepackage{tabularx}
\makeatletter
\newcommand{\multiline}[1]{%
  \begin{tabularx}{\dimexpr\linewidth-\ALG@thistlm}[t]{@{}X@{}}
    #1
  \end{tabularx}
}
\makeatother

\usepackage{amsthm}

\newtheorem{lemma}{Lemma}
\newtheorem{prop}{Proposition}

\usepackage[export]{adjustbox}
\usepackage{subfig}

\usepackage{hyperref}
\usepackage[capitalise]{cleveref}

\DeclareMathOperator*{\ex}{\mathbb{E}}

\DeclareMathOperator{\sgn}{\mathrm{sgn}}

\NewDocumentCommand{\trace}{m o}{%
	\ \!\mathrm{tr}#1 %
	\IfValueT{#2}{\ensuremath{_{#2}}}%
}

\newcommand{\tp}{\mathcal{B}_{\mathrm{TP}}}
\newcommand{\nontp}{\mathcal{B}_{\mathrm{NTP}}}

\newcommand{\mK}{\mathcal{K}}
\newcommand{\mX}{\mathcal{X}}
\newcommand{\mS}{\mathcal{S}}
\newcommand{\mH}{\mathcal{H}}
\newcommand{\mU}{\mathcal{U}}
\newcommand{\mV}{\mathcal{V}}

\usepackage{xcolor}
\definecolor{ao}{RGB}{253,141,60}

\newcommand{\cB}{\mathcal{B}}
\newcommand{\cT}{\mathcal{T}}

\newcommand{\tr}{\mathrm{Tr}}
\newcommand{\da}[1]{\left\|#1\right\|_\Diamond}
\newcommand{\trn}[1]{\left\|#1\right\|_{tr}}
\newcommand{\iu}{\mathrm{i}}
\begin{document}

\title{%
\mbox{Quasiprobabilistic imaginary-time evolution on} \mbox{quantum computers}
}

\author{Annie Ray}
\affiliation{Institute for Quantum Computing, University of Waterloo, Waterloo ON, N2L 0A4, Canada}
\affiliation{Department of Physics and Astronomy, University of Waterloo, Canada}
\affiliation{Perimeter Institute for Theoretical Physics, 31 Caroline St N, Waterloo ON, N2L 2Y5, Canada}
\author{Esha Swaroop}
\affiliation{Institute for Quantum Computing, University of Waterloo, Waterloo ON, N2L 0A4, Canada}
\affiliation{Department of Physics and Astronomy, University of Waterloo, Canada}
\affiliation{Perimeter Institute for Theoretical Physics, 31 Caroline St N, Waterloo ON, N2L 2Y5, Canada}
\author{Ningping Cao}
\affiliation{Perimeter Institute for Theoretical Physics, 31 Caroline St N, Waterloo ON, N2L 2Y5, Canada}
\affiliation{Institute for Quantum Computing, University of Waterloo, Waterloo ON, N2L 0A4, Canada}
\email{ncao@uwaterloo.ca}
\author{Michael Vasmer}
\affiliation{Inria Paris, 48 rue Barrault, Paris 75013, France}
\affiliation{Perimeter Institute for Theoretical Physics, 31 Caroline St N, Waterloo ON, N2L 2Y5, Canada}
\affiliation{Institute for Quantum Computing, University of Waterloo, Waterloo ON, N2L 0A4, Canada}
\author{Anirban Chowdhury}
 \affiliation{IBM Quantum, IBM T.\ J.\ Watson Research Center, Yorktown Heights, NY, 10598, USA}
 \email{anirban.chowdhury@ibm.com}

\date{\today}

\maketitle

\begin{abstract}

Imaginary-time evolution plays an important role in algorithms for computing ground-state and thermal equilibrium properties of quantum systems, but can be challenging to simulate on classical computers. Many quantum algorithms for imaginary-time evolution have resource requirements that are prohibitive for current quantum devices and face performance issues due to noise. Here, we propose a new algorithm for computing imaginary-time evolved expectation values on quantum computers, inspired by probabilistic error cancellation, an error-mitigation technique. Our algorithm works by decomposing a Trotterization of imaginary-time evolution into a probabilistic linear combination of operations, each of which is then implemented on a quantum computer. The measurement data is then classically post-processed to obtain the expectation value of the imaginary-time evolved state. Our algorithm requires no ancillary qubits and can be made noise-resilient without additional error-mitigation. It is well-suited for estimating thermal expectation values by making use of the notion of a thermal pure quantum state. We demonstrate our algorithm by performing numerical simulations of thermal pure quantum state preparation for the 1D Heisenberg Hamiltonian on 8 qubits, and by using an IBM quantum computer to estimate the energy of the same Hamiltonian on 2 qubits. We observe promising results compared to the exact values, illustrating the potential of our algorithm for probing the physics of quantum many-body systems on current hardware.

\end{abstract}

\section{Introduction}
Quantum computers are expected to have wide-ranging applications in various domains~\cite{daley2022Practical,bauer2020Quantum,orus2019Quantum,dalzell2023quantum} but perhaps most significantly in the simulation of quantum physical systems. 
Important problems in this arena concern the estimation of ground-state and thermal equilibrium properties of quantum many-body systems, problems which have applications in chemistry and material science~\cite{cao2019quantumchemistry,bauer2020Quantum,mcardle2020quantumcomputational,Babbush2018electronic}.
Consequently there has been an enormous amount of research \cite{Lemieux2021,Rall2020blockencodings,tubman2018postponing,Kivlichan2020} directed towards the development of quantum algorithms for these problems. Although computational complexity results limit the possibility for dramatic quantum speed-ups for these problems \cite{kitaev2002classical,gharibian2015survey,bravyi2022quantum}, there is also theoretical evidence that quantum computers potentially provide an advantage over classical algorithms \cite{gharibian2023dequantizing} .

The idea of \emph{imaginary-time evolution} plays a key role in many algorithms, both classical and quantum, for simulating ground-state and equilibrium properties~\cite{verstraete2004matrixproduct,zwolak2004mixedstate,wolf2015imaginarytime,lehtovaara2007solution,kraus2010generalized,mcclean2015compact,shi2018variational,mcclean2013feynmans,mcclean2015clock}.
Given a Hamiltonian $H$ and an inverse temperature $\beta \ge 0$, imaginary-time evolution (ITE) under $H$ is described by the exponential operator $e^{-\beta H}$. Performing ITE with sufficiently high $\beta$ will drive a quantum system to its ground state, making it a useful primitive in algorithms for ground state preparation. 
ITE is also used as a subroutine in quantum algorithms for preparing thermal equilibrium or Gibbs states \cite{poulin09sampling}, which find application in solving optimization problems. 
Performing ITE of quantum Hamiltonians using classical methods can be computationally prohibitive due to the exponential increase in Hilbert space dimension and the sign problem~\cite{signproblem1990manyelectron,Troyer2005sign}. Many existing quantum algorithms such as those based on block-encoding \cite{chowdhury2017quantum,mcardle2019variational,vanApeldoorn2020quantumsdpsolvers,holmes2022quantum,silva2023fragmented} require fault-tolerant quantum computers and are beyond the reach of current hardware~\cite{campbell2017roads}. 
Several recent works have therefore proposed quantum algorithms that are potentially more suitable for near-term to intermediate quantum devices~\cite{Motta2019DeterminingEA,sewell2210thermal,benedetti2021hardware,silva2024partition,hejazi2023Adiabatic}. 

In this work, we add to the quantum toolkit for imaginary-time evolution and develop a new near-term friendly algorithm using ideas from quantum error mitigation (QEM), a collection of classical post-processing techniques which aim to mitigate the effects of noise without requiring the full machinery of quantum error correction and fault-tolerance~\cite{temme2017Error,caiQuantumErrorMitigation2022}.
These techniques are usually not scalable, but have shown to significantly improve the performance of current devices~\cite{caiQuantumErrorMitigation2022}. In particular, our work is inspired by the error-mitigation technique of probabilistic error cancellation (PEC)~\cite{temme2017Error,endo2018Practical,kandala2019Error,piveteau2022Quasiprobability}. The original motivation of PEC as a technique is to invert a noise channel on noisy quantum devices \cite{temme2011quantum}. Probabilistic error cancellation simulates the action of an ideal, unitary operation from a linear combination of native channels implementable on a quantum device aided by post-processing. The basic idea underlying PEC is as follows.
Given a quantum device with a set of known native (and possibly noisy) gates, one decomposes a target unitary $U$ into a weighted sum of the native gates. The coefficients in the linear combination will be real numbers that sum to 1, although they may or may not be positive and hence form a \emph{quasi-probability distribution} (QPD). Nevertheless, the action of $U$ can be effected by a protocol that probabilistically samples a native gate from the linear combination. With sufficient number of samples and simple classical post-processing, the expectation value of an observable after the action of the target unitary $U$ can be reconstructed.  Recovering the ideal expectation value comes at an increased cost related to the negativity of the quasi-probability distribution. 
The success of the PEC method relies on an accurate characterization of the native gates in the quantum device.
This can be a very challenging task in practice, although recent work has shown that tailoring the device noise to Pauli noise (via randomized compiling~\cite{wallman2016Noise}) allows efficient and accurate noise characterization~\cite{vandenberg2023Probabilistic}.
We note that related QPD ideas have also been used in other quantum information contexts, such as classical simulation of quantum circuits~\cite{pashayan2015Estimating,howard2017Application,seddon2019Quantifying,heinrich2019Robustness,seddon2021Quantifying}, circuit cutting~\cite{mitarai2021Constructing,wiersema2022circuit,lowe2023Fast,brenner2023Optimal,piveteau2023Circuit}, and for mitigating noise in non-Clifford gates~\cite{piveteau2021Error}.

Here, we design a quasi-probabilistic algorithm inspired by PEC to perform imaginary-time evolution for a given Hamiltonian $H$. 
The starting point of our algorithm is the observation that we can implement non-unitary and even non-trace-preserving operations as a linear combination of native gates and measurement channels. 
We use our algorithm to estimate the expectation value of a given observable for imaginary-time evolved states, and we analyze its cost, specifically for the case of Trotterized imaginary time evolution. In addition, we show that our algorithm is suitable for preparing \textit{thermal pure quantum} (TPQ) states, pure states which can be used to estimate the thermal expectation value of observables; see \cref{fig:tpq-demo}.
We demonstrate the efficacy of our approach through classical simulations and demonstrations on quantum hardware.

\begin{figure}[ht]
    \centering
    \includegraphics[]{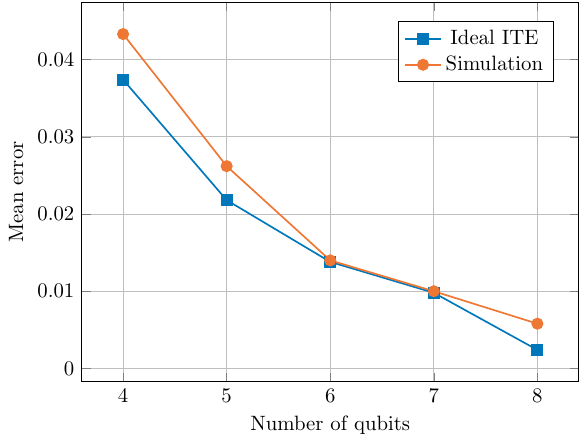}%
    \caption{Thermal expectation value estimation using ITE. For each data point we average the error in the thermal expecation value $|\langle \psi_i | O_j | \psi_i \rangle - \trace(|\psi_\beta\rangle\langle\psi_\beta|O_j)|$ over 30 randomly chosen Pauli observables $O_j$ and 10 random TPQ states $|\psi_i\rangle=e^{-0.02 H}U_i|0\rangle^{\otimes n}$, where $H$ is the 1D Heisenberg Hamiltonian on $n$ qubits and $U_i$ is a randomly-chosen Clifford unitary. The blue and orange points show the values for exact and simulated ITE, respectively, where the simulation implements \cref{alg:est-rescaled-exp}.}
    \label{fig:tpq-demo}
\end{figure}

Our method of simulating ITE in expectation using quasiprobability decompositions has two merits that make it amenable for implementation on near-term devices where qubits are few, and multi-qubit entangling gates are noisier than single-qubit gates.
First, our method does not need auxiliary qubits beyond those of the system described by a given Hamiltonian.
Second, we avoid the problem of implementing multi-qubit-controlled operations between auxiliary and data qubits~\cite{poulin2009thermal,poulin09sampling}, which can require large SWAP networks for implementation on devices that only permit entangling operations between nearest neighbors (such as present day superconducting devices).

The remainder of the paper is organized as follows.
In \cref{Sec:QPDGeneral} we review the quasiprobability decomposition of linear maps and discuss our extension to non-linear maps, including our algorithm for estimating rescaled expectation values.
Following this, in \cref{Sec:QPDforQITE} we use our algorithm for the specific case of imaginary time evolution and combine the protocol with various choices of initial states, focusing on the thermal pure quantum state. 
In this section we also present the results of our demonstrations conducted using IBM's quantum computing hardware.
Finally we provide a summary and discuss future work in \cref{sec:conc}.

\section{Implementing linear maps via quasiprobability decompositions}\label{Sec:QPDGeneral}
In particular, we consider the problem of implementing Hermiticity-preserving linear maps---which may not be trace-preserving in general---on a quantum device with a (basis) set of completely positive (CP) native operations.
In \cref{Sec:QPDGenAlgos} we present our algorithm for estimating rescaled expectation values of observables with respect to quantum states evolving under the linear map of interest.
In \cref{Sec:QPDGenAlgoAnalysis} we analyze the sampling cost of our algorithm.
Our method requires as input a decomposition of the linear map of interest into a linear combination of the set of available operations.
In \cref{Sec:QPDGenDecomp}, we discuss the conditions under which a linear map can be decomposed as such and we briefly review the method to obtain the linear decomposition.
And in \cref{Sec:QPDProd} we generalize our algorithm to the case of implementing a sequence of linear maps.

\subsection{Procedure for estimating rescaled expectation values of observables}\label{Sec:QPDGenAlgos}
Suppose we have a linear map $\cT$ over the $2^n$-dimensional Hilbert space, and a set $\{\cB_i\}$ of native operations on a quantum device such that for some real coefficients $q_i$,
\begin{align}\label{eq:qpd-gen-map}
    \cT(\rho) = \sum_i q_i \cB_i (\rho)
\end{align}
for all input states $\rho$.
Using \cref{eq:qpd-gen-map}, we can define parameters $\gamma$ and $p_i$ such that
\begin{align}\label{eq:qpd-params}
    \gamma &\coloneqq \sum_i |q_i|, \\
    p_i    &\coloneqq \frac{|q_i|}{\gamma}.
\end{align}
We refer to such a linear combination as a quasiprobability decomposition (QPD).
Note that since the $q_i$ are real, $\gamma \geq 0$ and the $p_i$ form a probability distribution.
Moreover, \cref{eq:qpd-gen-map} can be rewritten as 
\begin{align}\label{eq:qpd-rewrite}
    \cT(\rho) = \gamma \sum_i \sgn(q_i) \cdot p_i \cdot \cB_i (\rho).
\end{align}
This immediately suggests a way to probabilistically build the map $\cT$ by sampling from the set $\{\cB_i\}$ according to the probability distribution $p_i$ (see \cref{fig:qpdalgo}).
\begin{figure}[ht!]
    \centering
    \includegraphics{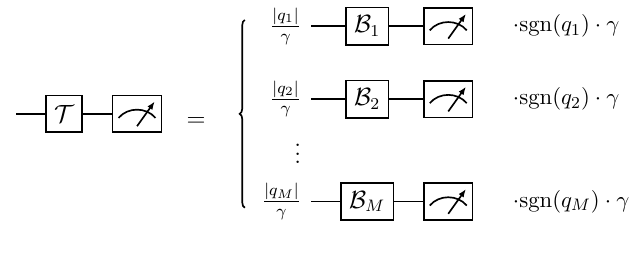}
    \caption{A schematic of a linear map $\cT$ implemented by sampling from its quasiprobability distribution (see \cref{eq:qpd-rewrite}, \cref{alg:est-rescaled-exp}). 
    Given a quasiprobability decomposition of the map $\cT$, basis circuits are randomly sampled according to the distribution $|q_i|/\gamma$, and their measurement outcomes are weighted by the factors $\mathrm{sgn}(q_i)\gamma$.
    }
    \label{fig:qpdalgo}
\end{figure}

In particular, we consider the task of estimating the expectation value of some observable $A$ (having operator norm $\|A\| \leq 1$) with respect to the state $\cT(\rho)$.
Since $\cT$ may be a non-trace-preserving map, the state $\cT(\rho)$ is unnormalized in general.
Thus, the rescaled expectation value of the observable $A$ is
\begin{align}\label{eq:rescaled-exp-vals}
    \langle A \rangle_{\cT(\rho)} := \frac{\trace{\left[A\cT(\rho)\right]}}{\trace{\left[\cT(\rho)\right]}}.
\end{align}
In order to correctly estimate the rescaled expectation value of $A$, we require estimates of the numerator and denominator of \cref{eq:rescaled-exp-vals} which are obtained using \cref{eq:qpd-rewrite} 
\begin{align}\label{eq:qpd-obs-gen-map}
    \trace{\left[A \cT(\rho) \right]} =&~ \gamma \sum_i \sgn(q_i) \cdot p_i \cdot \trace{\left[A \cB_i(\rho)\right]}, \\
    \trace{\left[\cT(\rho) \right]} =&~ \gamma \sum_i \sgn(q_i) \cdot p_i \cdot \trace{\left[\cB_i(\rho)\right]}.
\end{align}
Denoting the trace-preserving and non-trace-preserving operations in $\{\cB_i\}$ as $\tp$ and $\nontp$ respectively, the denominator in \cref{eq:rescaled-exp-vals} can be rewritten as 
\begin{align}\label{eq:rescaling-norm}
    \trace{\left[\cT(\rho) \right]} &= \gamma \left(\sum_{\cB_i \in \tp} \sgn(q_i) \cdot p_i +\sum_{\cB_i \in \nontp} \sgn(q_i) \cdot p_i \cdot \trace{\left[\cB_i(\rho)\right]} \right).
\end{align}
In this work, we consider $\nontp$ to comprise rank-1 projectors $\Pi_{\ket{\psi}} = \ket{\psi}\bra{\psi}$ on specific quantum states $\ket{\psi}$ (see \cref{table:EBLBasis} for example).
Typically such non-trace-preserving operations would be implemented via a POVM given by $\{\ket{\psi}\bra{\psi}, \ket{\psi^{\perp}}\bra{\psi^{\perp}}\}$ and post-selecting on the outcome $\ket{\psi}$.
Thus, $\trace{\left[\cB_i(\rho)\right]}$ is the probability of post-selecting the outcome corresponding to the projector for $\cB_i$.

If we randomly sample $N$ circuits from the probability distribution $p_i$ in \cref{eq:qpd-params}, the random variable $W_j := \gamma \sgn(q_i) I_j$ corresponding to the $j^{\mathrm{th}}$ circuit is an estimator (see \cref{eq:trace-estimator}) for $\trace{[\cT(\rho)]}$, where $j \in \{1, 2,...,N\}$.
Here, the indicator variable $I_j$ is 0 when the $j^{\mathrm{th}}$ circuit implements an operation $\cB_i \in \nontp$ whose associated POVM outcome is discarded; otherwise $I_j = 1$.

\begin{align}\label{eq:trace-estimator}
    \ex{[W_j]} = \ex{[\gamma \sgn(q_i)I_j]} &=  \gamma \sum_{i,I_j} \sgn(q_i) I_j \cdot \frac{|q_i|}{\gamma} \cdot \Pr[I_j] \\
    &= \gamma \sum_{i} \sgn(q_i) \cdot p_i \cdot \trace{[\cB_i{(\rho)}]}\\
    &= \trace{[\cT(\rho)]}
\end{align}
Similarly, we show in~\cref{app:algcosts} that the random variable $M_j = \gamma \sgn(q_i) A_j$, where $A_j$ is the outcome of measuring the observable $A$ for the $j^{\mathrm{th}}$ circuit, is an estimator of $\trace{[A\cT(\rho)]}$.
\begin{align}\label{eq:obs-estimator}
    \ex{[M_j]} = \trace{[A\cT(\rho)]}
\end{align}
Together, the two estimators $W_j, M_j$ allow us to estimate the rescaled expectation value in \cref{eq:rescaled-exp-vals}.
We formalize the procedure described above in \cref{alg:est-rescaled-exp}.

\begin{algorithm}[h!]
\caption{Estimate re-scaled expectations}\label{alg:est-rescaled-exp}
\begin{algorithmic}[1]
    \Require 
    \Statex QPD of $\cT$ as $\cT(\cdot) = \gamma \sum_i \sgn(q_i) p_i \cB_i (\cdot)$
    \Statex Initial state $\rho$
    \Statex Number of random samples $N$
    \Statex Observable $A$
    \Ensure Estimate of the re-scaled expectation value $\langle A \rangle_\cT$
        \For{$j=1$ to $N$}
            \State Sample $i$ from the distribution $ p_i $
            \If{$\cB_i$ is TP} \State apply $\cB_i$ directly to $\rho$
            \State $I_j \gets 1$ \Comment{Indicator variable}
            \Else
            \State Measure in the basis corresponding to $\cB_i$ 
            \State Store the outcome as $I_j \in \{0, 1\}$
            \EndIf
            \State $W_j \gets \gamma \cdot \sgn(q_i) \cdot I_j$ \Comment{Weighted outcome}
            \State Measure $A$
            \State Store the outcome as $A_j$
            \State $M_j \gets \gamma \cdot \sgn(q_i) \cdot A_j$
        \EndFor
        \State $\widetilde{\trace{\left[\cT(\rho)\right]}} \gets \sum_j W_j/N$ \Comment{Empirical estimate of $\trace{\left[\cT(\rho)\right]}$}
        \State $\widetilde{\trace{\left[A\cT(\rho)\right]}} \gets \sum_j M_j/N$ \Comment{Empirical estimate of $\trace{\left[A\cT(\rho)\right]}$}
        \State \Return $\widetilde{\trace{\left[A\cT(\rho)\right]}}/\widetilde{\trace{\left[\cT(\rho)\right]}}$
    \end{algorithmic}
\end{algorithm}

\subsection{Sampling cost of the algorithm}\label{Sec:QPDGenAlgoAnalysis}
The crux of \cref{alg:est-rescaled-exp} lies in the estimation of $\trace{[A\cT(\rho)]}$ and  $\trace{[\cT(\rho)]}$ separately, in order to estimate $\langle A \rangle_{\cT(\rho)}$ via their ratio as per \cref{eq:rescaled-exp-vals}. 
We use standard concentration inequalities to bound the number of samples $N$ needed to estimate $\trace{[A\cT(\rho)]}$ and  $\trace{[\cT(\rho)]}$ each within some precision $\epsilon$, which allows us to bound the error in the estimate of $\langle A \rangle_{\cT(\rho)}$ as per \cref{lem:qpd-cost}.

\begin{lemma}[Algorithm 1 cost]\label{lem:qpd-cost}
    Given a decomposition of a linear map $\cT$ into a set of operations as in \cref{eq:qpd-gen-map} and a Hermitian observable $A$ with $\|A\| \le 1$, \cref{alg:est-rescaled-exp} uses $N = \mathcal{O}\left(\frac{\gamma^2}{\epsilon^2}\right)$ random circuits drawn from the distribution $p_i$ in \cref{eq:qpd-params}, to return an estimate of the re-scaled expectation value $ \langle A \rangle_\cT$ within error
    \begin{align}
        \frac{2\epsilon}{|\trace{[\cT(\rho)]}|}\left(1 +  \frac{\epsilon}{|\trace{\left[\cT(\rho)\right]}|}\right).
    \end{align}
\end{lemma}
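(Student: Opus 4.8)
The plan is to bound the errors in estimating the numerator $\trace[A\cT(\rho)]$ and denominator $\trace[\cT(\rho)]$ separately, then propagate these into the error of the ratio. First I would invoke the unbiasedness of the two estimators already established in \cref{eq:trace-estimator} and \cref{eq:obs-estimator}: averaging over $N$ samples, the empirical means $\widetilde{\trace[\cT(\rho)]} = \frac{1}{N}\sum_j W_j$ and $\widetilde{\trace[A\cT(\rho)]} = \frac{1}{N}\sum_j M_j$ have expectations equal to the true values. The key structural fact is that each $W_j = \gamma\sgn(q_i)I_j$ and $M_j = \gamma\sgn(q_i)A_j$ is bounded in absolute value by $\gamma$, since $I_j \in \{0,1\}$ and $|A_j| \le \|A\| \le 1$ by assumption. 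This boundedness is exactly what is needed to apply a concentration inequality.

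Next I would apply Hoeffding's inequality (or Chebyshev combined with the variance bound $\var[W_j], \var[M_j] \le \gamma^2$) to each empirical mean. For a sum of $N$ i.i.d.\ random variables each bounded in $[-\gamma, \gamma]$, Hoeffding gives that the empirical average deviates from its mean by more than $\epsilon$ with probability at most $2\exp(-N\epsilon^2/(2\gamma^2))$. Setting this failure probability to a constant (say $\delta$) and solving for $N$ yields $N = \mathcal{O}(\gamma^2/\epsilon^2 \cdot \log(1/\delta))$, which gives the claimed scaling $N = \mathcal{O}(\gamma^2/\epsilon^2)$ for fixed confidence. Thus with $N$ samples of this order, both estimates are within additive error $\epsilon$ of their true values simultaneously (by a union bound), i.e.\ $|\widetilde{\trace[\cT(\rho)]} - \trace[\cT(\rho)]| \le \epsilon$ and $|\widetilde{\trace[A\cT(\rho)]} - \trace[A\cT(\rho)]| \le \epsilon$.

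The final step is the error propagation for the ratio, which I expect to be the main technical point. Writing $a = \trace[A\cT(\rho)]$, $b = \trace[\cT(\rho)]$ for the true quantities and $\tilde a, \tilde b$ for the estimates with $|\tilde a - a| \le \epsilon$, $|\tilde b - b| \le \epsilon$, I want to bound $|\tilde a/\tilde b - a/b|$. Using the algebraic identity
\begin{align}
    \frac{\tilde a}{\tilde b} - \frac{a}{b} = \frac{(\tilde a - a)b - a(\tilde b - b)}{b\,\tilde b} = \frac{(\tilde a - a)}{\tilde b} - \frac{a}{b}\cdot\frac{(\tilde b - b)}{\tilde b},
\end{align}
and noting $|a/b| = |\langle A\rangle_\cT| \le 1$ since $\|A\| \le 1$, one bounds the numerator contributions by $\epsilon$ each. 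The delicate part is controlling the denominator $|\tilde b|$ from below: since $|\tilde b| \ge |b| - \epsilon$, I would factor the estimate's deviation through $|b|$. A clean way is to write $1/|\tilde b| \le 1/(|b| - \epsilon)$ and then express $|b| - \epsilon = |b|(1 - \epsilon/|b|)$, so that the combined error is bounded by $\frac{2\epsilon}{|b|}\cdot\frac{1}{1 - \epsilon/|b|}$; a standard manipulation (using $1/(1-x) \le 1 + x/(1-x)$ or a first-order bound in the regime $\epsilon < |b|$) recovers the stated form $\frac{2\epsilon}{|\trace[\cT(\rho)]|}\left(1 + \frac{\epsilon}{|\trace[\cT(\rho)]|}\right)$. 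The obstacle is getting the constants and the exact nonlinear factor to match; I would be careful that the factor of $2$ comes from the two separate error terms each of size $\epsilon/|b|$, while the second-order correction captures the denominator instability when $\epsilon$ is not negligible relative to $|b|$.
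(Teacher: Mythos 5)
Your proposal follows essentially the same route as the paper's proof: unbiased estimators bounded in $[-\gamma,\gamma]$, Hoeffding's inequality giving $N = \mathcal{O}(\gamma^2/\epsilon^2)$ for both the numerator and denominator estimates, the same algebraic identity for the ratio error using $|\langle A\rangle_\cT| \le 1$, and the same lower bound $|\tilde b| \ge |b| - \epsilon$ on the empirical denominator. Your closing remark about the nonlinear factor is apt --- the paper's final step (its \cref{eq:denom-G}) glosses over exactly the same first-order approximation you flag, so your proposal is, if anything, slightly more careful on that point.
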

\begin{proof}[\bf Proof sketch of \cref{lem:qpd-cost}]
We note that for an observable $A$ with operator norm $||A|| \leq 1$, the random variables $W_j, M_j$ are such that
\begin{align}
    -\gamma \leq W_j, M_j \leq \gamma.
\end{align} 
For $N$ circuits (enumerated by the index $j$) sampled independently from $p_i$, we can define the following:
\begin{align}
    \overline{W} &= \frac{\sum_{j}W_j}{N}, \label{eq:numerator-bound} \\
    \overline{M} &= \frac{\sum_{j}M_j}{N} \label{eq:denom-bound} 
\end{align}
For some $\epsilon \geq 0$, Hoeffding's inequality implies
\begin{align}
        \Pr\left[|\overline{W} - \ex[W_j]| \ge \epsilon \right] &\le \exp \left(-\tfrac{N\epsilon^2}{2\gamma^2} \right) \label{eq:concentration-bound-1} \\
        \Pr\left[|\overline{M} - \ex[M_j]| \ge \epsilon \right] &\le \exp \left(-\tfrac{N\epsilon^2}{2\gamma^2} \right). \label{eq:concentration-bound-2}
    \end{align}
We can bound the RHS of \cref{eq:concentration-bound-1,eq:concentration-bound-2} by $0 \leq \delta < 1$ by choosing $N = \frac{2\gamma^2}{\epsilon^2}\log(1/\delta)$.
Then with probability at least $1-\delta$, we have 
\begin{align}
        \left|\frac{\overline{M}}{\overline{W}} - \frac{\mathbb{E}[M_j]}{\mathbb{E}[W_j]}\right| 
        &\leq \left| \frac{\mathbb{E}[W_j](\overline{M} - \mathbb{E}[M_j])}{\overline{W}\mathbb{E}[W_j]} \right| + \left|\frac{\mathbb{E}[M_j](\mathbb{E}[W_j] - \overline{W})}{\overline{W}\mathbb{E}[W_j]} \right|\\
        &= \frac{\epsilon}{|\overline{W}|} \left(1 + \left|\frac{\mathbb{E}[M_j]}{\mathbb{E}[W_j]}\right|\right)\\
        &\leq \frac{2\epsilon}{|\overline{W}|} \label{eq:concentration-bound-ratio-1}
    \end{align}
In the above, we have used \cref{eq:trace-estimator,eq:obs-estimator}, and that $\tfrac{\mathbb{E}[M_j]}{\mathbb{E}[W_j]}\le 1$.
Additionally, we have
\begin{align}
        \frac{1}{|\overline{W}|} \le \frac{1}{|\mathbb{E}[W_j]-\epsilon|} \le \frac{1}{|\mathbb{E}[W_j]|}\left(1+\frac{\epsilon}{|\mathbb{E}[W_j]|}\right). \label{eq:denom-G}
    \end{align}
Using \cref{eq:concentration-bound-ratio-1,eq:denom-G}, and recalling \cref{eq:rescaled-exp-vals}, we see that for the chosen $N$, with probability at least $(1-\delta)$, we have
\begin{align}\label{eq:concentration-bound-final-single}
        \left|\frac{\overline{M}}{\overline{W}} - \langle A \rangle_{\cT} \right| \le \frac{2\epsilon}{|\trace{\left[\cT(\rho)\right]}|}\left(1 + \frac{\epsilon}{|\trace{\left[\cT(\rho)\right]}|}\right).
    \end{align}
\end{proof}
\Cref{alg:est-rescaled-exp} requires a decomposition of the form \cref{eq:qpd-gen-map} as input. Given a basis, such a decomposition can be obtained classically by a linear program~\cite{temme2017Error, piveteau2020thesis}. The cost of finding the QPD scales polynomially with the Hilbert-space dimension and is thus tractable only for operations on $k$-qubit subsystems with $k \ll n$. Implementing an $n$-qubit operation therefore requires us to first approximate it as a sequence of $k$-qubit operations, and then use the QPD method for each operation in the sequence. This protocol and an analysis of its cost is the focus of the next section.
We defer a detailed discussion of how to find a suitable quasi-probability decomposition to \cref{Sec:QPDGenDecomp}. There we will also discuss some possible choices of QPD basis, and their effect on the cost of our algorithms through the factor $\gamma$.

\subsection{Implementing a sequence of operations}
\label{Sec:QPDProd}

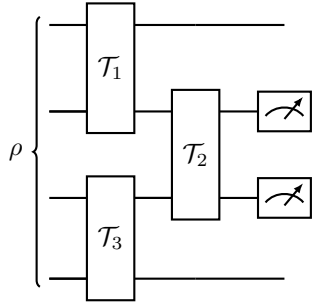
\begin{figure}[ht]
    \centering
    \begin{quantikz}
        \lstick[wires=4]{$\rho$} & \gate[2]{\cT_1} & \qw & \qw \\
         & \qw & \gate[2]{\cT_2} & \meter{} \\
         & \gate[2]{\cT_3} & \qw & \meter{} \\
         & \qw & \qw & \qw\\
    \end{quantikz}  
    \caption{Example of an operation $\cT' = \cT_1 \cT_2 \cT_3$ that we implement. 
    Here $\cT_1, \cT_2, \cT_3,$ are 2-qubit operations applied on nearest-neighbour qubits at different locations.
    In the analysis of \cref{lem:qpd-cost-repeated}, we assume for simplicity that $\cT_1 = \cT_2 = \cT_3 = \cT$.}
    \label{fig:repeated-application-map}
\end{figure}

We now explain how to implement a sequence of linear maps $\cT_1,~\cT_2,\dots,\cT_R$ where each map $\cT_i$ acts non-trivially on a constant $k$ number of qubits (see \cref{fig:repeated-application-map}). Later we will seek to implement an imaginary-time evolution by approximating it through a Trotter-Suzuki formula which gives to a sequence of $k$-local maps.
We restrict ourselves to the case where each $\cT_i$ has the form $\cT \otimes \mathcal{I}$ for some $k$-qubit map $\cT$. That is to say, $\cT_i$ it acts as $\cT$ on a $k$-qubit subsystem and as identity on all other qubits; note that each $\cT_i$ may act non-trivially on a different set of $k$ qubits. We emphasize that our algorithm and its analysis do not require the $\cT_i$'s to have this form. We make this assumption because it helps simplify notation and is sufficient for implementing imaginary-time evolution of $k$-local Hamiltonians such as the Heisenberg model. 
We present the protocol for applying a sequence of $\cT$ maps in \cref{alg:est-repeated}, and in \cref{lem:qpd-cost-repeated} we analyze its cost.

\begin{algorithm}[ht]
\caption{Estimate re-scaled expectations for repeated applications of $\cT$}\label{alg:est-repeated}
\begin{algorithmic}[1]
\Require
\Statex Operator $\mathcal{E}$ consisting of $R$ applications of $\cT$
\Statex QPD of $\cT$ as $\cT(\cdot) = \gamma \sum_i \sgn(q_i) p_i \cB_i (\cdot)$
\Statex Initial state $\rho$
\Statex Number of random samples $N$
\Statex Observable $A$
\Ensure Estimate of the re-scaled expectation value $\langle A \rangle_{\mathcal{E}}$
        \For{$j=1$ to $N$}
            \State Sample $(i_{1},...,i_{R})$ from the joint distribution $p_i^{R}$
            \For{$k=1$ to $R$}
                \If{$\cB_{i_k}$ is TP} 
                    \State Apply $\cB_{i_{k}}$
                    \State $I_{j_k}\gets1$ \Comment{Indicator variable}
                \Else
                    \State Measure in the basis corresponding to $\cB_{i_k}$
                    \State Store the outcome as $I_{j_k} \in \{0, 1\}$
                \EndIf
                \State $W_{j_k} \gets \gamma \cdot \sgn(q_{i_k}) \cdot I_{j_k}$ \Comment{Weighted outcome}
                \EndFor
        \State Measure $A$
        \State Store the outcome as $A_j$
        \State $M_j \gets \gamma^R \cdot \sgn(q_{j_1}) ... \sgn(q_{j_R}) \cdot A_j$
        \EndFor
        \State $\widetilde{\trace{\left[\mathcal{E}(\rho)\right]}} \gets \sum_{j}(\prod_{k} W_{j_k})/N$ \Comment{Empirical estimate of $\trace{\left[\mathcal{E}(\rho)\right]}$}
        \State $\widetilde{\trace{\left[A\mathcal{E}(\rho)\right]}} \gets \sum_{j} M_{j}/N$ \Comment{Empirical estimate of $\trace{\left[A\mathcal{E}(\rho)\right]}$}
        \State \Return $\widetilde{\trace{\left[A\mathcal{E}(\rho)\right]}}/\widetilde{\trace{\left[\mathcal{E}(\rho)\right]}}$
    \end{algorithmic}
\end{algorithm}

\begin{lemma}[Algorithm 2 cost]\label{lem:qpd-cost-repeated}
    Suppose we are given the QPD of a linear map $\cT$, acting non-trivially on a constant number of qubits. 
    Define the $n$-qubit map $\mathcal{E}$ as the composition of $R$ applications of $\cT$ where each $\cT$ acts non-trivially on at most $k$-qubits (see \cref{fig:repeated-application-map}). 
    Then, for a Hermitian observable $A$ with $\|A\|\le 1$, \cref{alg:est-repeated} uses $N = \mathcal{O}\left(\frac{\gamma^2}{\epsilon^2}\right)$ random circuits drawn from the distribution $p_i$ in \cref{eq:qpd-params}, to return an estimate of the rescaled expectation value $\langle A \rangle_\mathcal{E}$ within error 
    \begin{align}\label{eq:alg-2-error}
        \frac{2\epsilon}{|\trace{\left[\mathcal{E}(\rho)\right]}|}\left(1 +  \frac{\epsilon}{|\trace{\left[\mathcal{E}(\rho)\right]}|}\right).
    \end{align}
\end{lemma}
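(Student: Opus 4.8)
The plan is to mirror the two-estimator argument in the proof of \cref{lem:qpd-cost}, replacing the single-layer weights by the $R$-fold products that \cref{alg:est-repeated} actually produces. Concretely, I would define $\widetilde{W}_j := \prod_{k=1}^{R} W_{j_k}$ as the estimator for the denominator $\trace{[\mathcal{E}(\rho)]}$ and keep $M_j = \gamma^{R}\,\sgn(q_{i_1})\cdots\sgn(q_{i_R})\,A_j$ as the estimator for the numerator $\trace{[A\mathcal{E}(\rho)]}$. The goal is to show that these are bounded and unbiased, and then to run them through the identical Hoeffding-plus-ratio machinery of \cref{lem:qpd-cost}.

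First I would prove unbiasedness, which is the heart of the argument. Expanding $\mathcal{E}=\cT^{\circ R}$ one layer at a time using the QPD \cref{eq:qpd-gen-map} gives
\[
  \trace{[\mathcal{E}(\rho)]} = \gamma^{R}\!\!\sum_{i_1,\dots,i_R}\Big(\prod_{k=1}^{R}\sgn(q_{i_k})\,p_{i_k}\Big)\,\trace{\left[\cB_{i_R}\circ\cdots\circ\cB_{i_1}(\rho)\right]},
\]
and the analogous expansion with $A$ inserted before the trace for $\trace{[A\mathcal{E}(\rho)]}$. Since \cref{alg:est-repeated} samples the tuple $(i_1,\dots,i_R)$ from the product distribution $p_i^{R}$ and applies the sampled operations sequentially, the key identity I need is that, conditioned on a fixed tuple, $\ex[\prod_{k} I_{j_k}\mid i_1,\dots,i_R] = \trace{[\cB_{i_R}\circ\cdots\circ\cB_{i_1}(\rho)]}$. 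I would establish this by induction on the layer index: each $\cB_{i_k}\in\tp$ leaves the running (unnormalized) trace unchanged and sets $I_{j_k}=1$ deterministically, whereas each rank-one projector $\cB_{i_k}\in\nontp$ yields $I_{j_k}=1$ with probability equal to its post-selection probability, which is exactly the factor by which the trace of the sequentially applied operators shrinks. Taking the outer expectation over the sampled tuple then reproduces the displayed sum and gives $\ex[\widetilde{W}_j]=\trace{[\mathcal{E}(\rho)]}$; the same bookkeeping, now also averaging the final $A$-measurement outcome retained on successful post-selection, gives $\ex[M_j]=\trace{[A\mathcal{E}(\rho)]}$.

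Next I would bound the estimators and invoke concentration. Because $|W_{j_k}|\le\gamma$ (as $I_{j_k}\in\{0,1\}$) and $|A_j|\le\|A\|\le 1$, both products satisfy $|\widetilde{W}_j|\le\gamma^{R}$ and $|M_j|\le\gamma^{R}$, so the composed map behaves exactly like a single map in \cref{lem:qpd-cost} with effective sampling overhead $\gamma^{R}$. Applying Hoeffding's inequality to the i.i.d.\ samples $\{\widetilde{W}_j\}$ and $\{M_j\}$ and choosing $N=\mathcal{O}(\gamma^{2R}\epsilon^{-2}\log(1/\delta))$ makes each empirical mean $\overline{W},\overline{M}$ deviate from its target by at most $\epsilon$ with probability $1-\delta$. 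Finally I would reuse the ratio bounds \cref{eq:concentration-bound-ratio-1,eq:denom-G} verbatim---with $\overline{W},\overline{M}$ now the empirical means of the product estimators and $\trace{[\cT(\rho)]}$ replaced by $\trace{[\mathcal{E}(\rho)]}$---to conclude the error estimate \cref{eq:alg-2-error}.

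The step I expect to be the main obstacle is the inductive unbiasedness identity for $\prod_k I_{j_k}$. The delicate point is the consistent handling of post-selection failures at intermediate layers: a single discarded projector outcome must zero out the entire run's contribution, and one has to check that the conditional probabilities of the sequential outcomes multiply into precisely $\trace{[\cB_{i_R}\circ\cdots\circ\cB_{i_1}(\rho)]}$, including the interleaving of trace-preserving layers. Once this identity is in hand, the remainder is a mechanical repetition of \cref{lem:qpd-cost} with $\gamma$ promoted to the overhead $\gamma^{R}$ of the full length-$R$ sequence.
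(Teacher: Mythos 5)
Your proposal is correct and follows essentially the same route as the paper's proof in \cref{app:algcosts}: product estimators bounded by $\pm\gamma^{R}$, unbiasedness, Hoeffding's inequality with $N=\mathcal{O}(\gamma^{2R}\epsilon^{-2}\log(1/\delta))$, and the same ratio argument as \cref{lem:qpd-cost}. The ``delicate'' inductive identity you flag is resolved in the paper exactly as you anticipate: the indicator of success at layer $r$ is given the conditional probability $\trace{\left[\cB_{i_r}(\cdots\cB_{i_1}(\rho))\right]}/\trace{\left[\cB_{i_{r-1}}(\cdots\cB_{i_1}(\rho))\right]}$, so the product of indicators telescopes to $\trace{\left[\cB_{i_R}(\cdots\cB_{i_1}(\rho))\right]}$.
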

The proof of \cref{lem:qpd-cost-repeated} is similar to that of \cref{lem:qpd-cost} and can be found in \cref{app:algcosts}.

\subsection{Quasiprobability decomposition of linear maps}\label{Sec:QPDGenDecomp}
We now consider the task of obtaining the linear decomposition in \cref{eq:qpd-gen-map}, which is required as an input to \cref{alg:est-rescaled-exp,alg:est-repeated}.
Given a $k$-qubit map $\cT$ and a complete basis (consisting of $16^{k}$ linearly independent operators), the coefficients $q_i$ in \cref{eq:qpd-gen-map} can be directly solved for via a system of $16^k$ independent equations. 
The tensor product of the Endo-Benjamin-Li (EBL) basis~\cite{endo2018Practical} is an example of a complete basis for all linear maps.
It is composed of single-qubit Clifford operations (TP) and projectors (non-TP); see \cref{table:EBLBasis}.
It may be possible for a basis with fewer elements than $16^k$ to be used to obtain a QPD for a given map $\cT$.
We note that a basis containing only trace-preserving (TP) elements can only be used to express maps $\cT$ that are scalar multiples of TP operations.
However, operations such as the 
ITE operator are not simply scalar multiples of TP operations\footnote{The trace of the state resulting from applying 
 the ITE operation is dependent on the input state.}, and therefore requires the QPD basis to have at least one non-TP operation.

Different choices of basis will affect the values of the coefficients $q_i$ and hence the value of $\gamma$.
From the analysis of our algorithm in the \cref{Sec:QPDGenAlgoAnalysis}, we know that the number of samples required for estimating expectation values scales as $\mathcal{O}(\gamma^2)$.
Consequently, the sampling cost of our method can be reduced by minimizing the value of $\gamma$ in the QPD of the map of interest.
In the context of PEC, the QPD of a $k$-qubit map $\cT$ over a basis $\{\cB_i\}$, can be obtained by solving a linear program~\cite{temme2017Error, piveteau2020thesis} that minimizes $\gamma$. 
It is known that a lower bound of $\gamma$ for a HPTP map $\mathcal{T}$ is given by its diamond norm $\|\mathcal{T}\|_\Diamond$~\cite{zhao2023power,regula2021operational}
\footnote{The optimal cost $\gamma_\text{opt}$ is precisely $\|\mathcal{T}\|_\Diamond$ or $2\|\mathcal{T}\|_\Diamond$ when allowing the basis to consist of arbitrary quantum instruments, see~\cite{zhao2023power,regula2021operational}.}.

\begin{figure}[ht]
    \centering
    \includegraphics[]{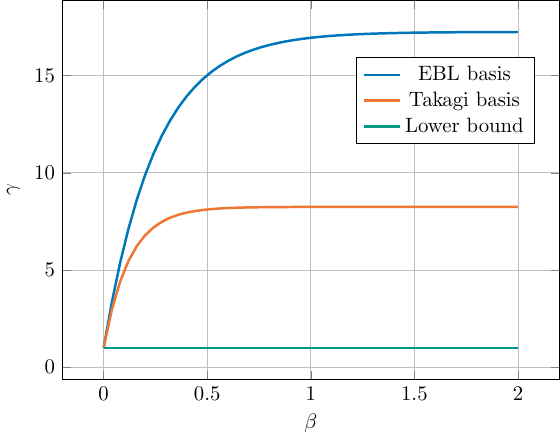}
    \caption[Scaling of QPD cost with inverse temperature]{Scaling of the QPD cost $\gamma$ with inverse temperature $\beta$ for the decomposition of $e^{-\beta H}$ using different basis sets, where $H$ is the 2-qubit Heisenberg Hamiltonian $H=-XX-YY-ZZ+II$. 
    The Takagi basis (orange) outperforms the EBL basis (blue), illustrating the utility of including entangling gates in the QPD basis set.
    The lower bound, obtained using the diamond norm~\cite{zhao2023power,regula2021operational}, is shown in teal. (See~\cref{app:optimalcost} for the analysis of optimal sampling cost and the effect of adding identity to the Hamiltonian for ITE.)}
    \label{fig:gammavbeta}
\end{figure}

As an example of the impact of the choice of basis, we consider the ITE operator $e^{-\beta H}$ of the $2$-qubit Heisenberg Hamiltonian $H = -XX -YY -ZZ+II$\footnote{See \cref{app:optimalcost} for an explanation of the added identity term.}.
In \cref{fig:gammavbeta}, we show the trends of $\gamma$ as a function of $\beta$ for two different choices of basis. 
We find that the ITE operator for the $2$-qubit Heisenberg Hamiltonian can be decomposed into the Takagi basis (\cref{table:takagi}) with a lower value of $\gamma$ than the EBL basis, where the Takagi basis value is closer to the lower bound given by the diamond norm (see \cref{fig:gammavbeta}).
This reduced value of $\gamma$ can be attributed to the presence of entangling operations in the Takagi basis.
Since the EBL basis for $2$-qubits is simply the (uncorrelated) Cartesian product of two single-qubit basis sets, it has a high sampling overhead for simulating entangling operations in expectation.
This is apparent in the case of the CNOT operation~\cite{endo2018Practical,takagi2021optimal}, which has $\gamma_{\mathrm{CNOT}} = 9$ for the $2$-qubit EBL basis but $\gamma_{\mathrm{CNOT}} = 1$ when decomposed into the Takagi basis.
We remark that although the Takagi basis permits a decomposition of the ITE operator in this case, it is not a complete basis for general HP maps on a 2-qubit system.
The Takagi basis is only known to be complete for the set of all $2$-qubit CPTP maps~\cite{takagi2021optimal}.
A more detailed analysis of QPD and choices of basis can be found in~\cref{app:choi_qpd}.

The problem of obtaining a QPD for a given map requires careful consideration of two important factors: the choice of basis set used for the decomposition and the value of $\gamma$ obtained from the QPD.
In general, the QPD corresponding to the optimal value of $\gamma$ could be a linear combination of entangling operations or general quantum instruments that need to dilate to a bigger Hilbert space, which may be expensive (in gate depth) to implement directly on a quantum device. 
This tradeoff between the generality of the basis and the optimality of the decomposition~\cite{endo2018Practical,endo_mitigating_2019,takagi2021optimal} has been well studied for the specific case of QPDs for CPTP maps, with several results on optimizing~\cite{piveteau2022Quasiprobability,piveteau2020thesis} the choice of basis for different use-cases.
However, the solution to this problem for general HP maps, including NTP maps such as ITE which commonly occur in physics, is yet (to the best of our knowledge) unknown.
We provide some discussions in~\cref{app:choi_qpd}.
Additionally, the problem of finding a sample-efficient yet hardware-friendly basis for decomposing maps acting on systems larger than $2$-qubits quickly becomes intractable (since the size of the associated classical optimization problem grows polynomially with the dimension of the Hilbert space) and remains a challenge even for the case of decomposing CPTP maps.

\section{QPD implementation of imaginary-time evolution}\label{Sec:QPDforQITE}

We now use the algorithms developed in \cref{Sec:QPDGeneral} to implement imaginary time evolution $e^{-\beta H}$ for a local Hamlitonian $H$
\begin{align}\label{eq: localHam}
    H = \sum_{l=1}^L H_l,
\end{align}
presented as a sum of $L=O(\poly(n))$ local terms $H_l$ acting non-trivially on at most a constant $k$ number of qubits. 
We may assume without loss of generality that $H \ge 0$ and $\|H\| \le 1$ where $\|\cdot\|$ denotes the operator norm. 
Our goal is to implement the operator $e^{-\beta H}$ for real $\beta > 0$. 
More precisely,
for a given initial state $|\psi\rangle$, we want to estimate expectation values of a given observable $A$ with respect to the quantum state $|\psi_\beta \rangle \propto e^{-\beta H}|\psi\rangle$. 
Some basics of ITE and the optimal decomposition of ITE are presented in~\cref{app:ITE} and \cref{app:optimalcost}, respectively.
 
\subsection{Warm-up example: 2-qubit Heisenberg model}
We first consider the simple case of implementing the ITE of a 2-qubit Heisenberg model 
\begin{align}\label{eq:2qHeisHamiltonian}
    H = -XX - YY - ZZ
\end{align}
Since the basis $\{\cB_i\}$ in \cref{table:EBLBasis} is known to be complete for all linear maps on single-qubits, we use the 256-element Cartesian product set $\{\cB_i\} \times \{\cB_j\}$ as the basis for decomposing the $2$-qubit ITE operator $e^{-\beta H}$ into a QPD.
In \cref{HeisExpt}, we simulate \cref{alg:est-rescaled-exp} for estimating the energy of the $2$-qubit Heisenberg model when it is initialized in the state $\ket{00}$ and undergoes ITE for increasing values of $\beta = 0.01t$, where $t$ is the Trotter step.
We note that the simulation is noiseless and there is no Trotter error because $H$ commutes with itself.
Therefore, the only error in the simulation is due to the QPD sampling.

Additionally, we implement the $2$-qubit ITE operator $e^{-\beta H}$ using \cref{alg:est-rescaled-exp} on a 5-qubit superconducting processor (\texttt{ibm\_manila}).
In this case, we use the noisy native gates on the device as the basis for QPD using noise characterization information.
In particular, we take the noisy basis $\{\cB'_i\}$ to be $\{\mathcal{N} \circ \cB_i\}$ where $\mathcal{N}$ is the noise channel used to model the noise in the device, based on calibration data provided by IBM \cite{ibmq_calibration}.

In the simulation and demonstration on hardware, we used 400, 800, 3200, and 25600 QPD samples for Trotter steps 1, 2, 3, and 4, respectively.
For each sample, 512 shots of the corresponding circuit were performed (or simulated).
For each Trotter step, this process was repeated 10 times; the data in \cref{HeisExpt} show the mean estimate and the standard deviation.

\begin{figure}[ht!]%
    \centering
    \includegraphics[scale=.95]{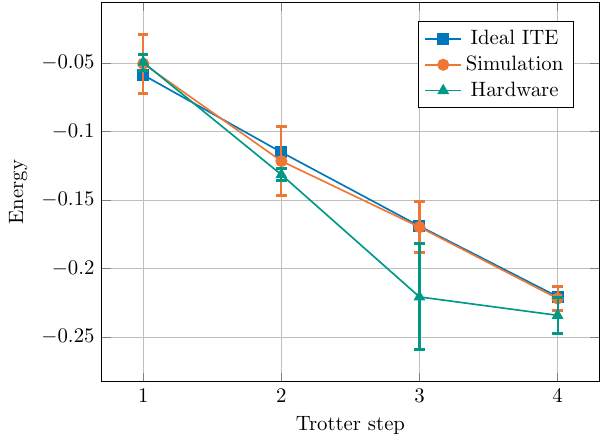}%
    
    \caption{Energy estimation using ITE. We estimate the energy $\langle \psi_t | H | \psi_t \rangle / \langle\psi_t|\psi_t\rangle$, where $H$ is the 2-qubit Heisenberg Hamiltonian and $|\psi_t\rangle = (e^{-0.01H})^t|00\rangle$ is the imaginary-time evolved state with $t$ Trotter steps. The blue, orange, and teal points show the values for exact, simulated, and hardware ITE, respectively, where the simulation (resp., demonstration) implements \cref{alg:est-repeated} on classical (resp., quantum) hardware.
    }
    \label{HeisExpt}
\end{figure}

The agreement between the demonstration on noisy hardware and the noiseless (classical) simulation indicates that our method can be made noise-resilient when information about the noise in a device is available.  
We note that the demonstration data for the third Trotter step is significantly worse than for the other Trotter steps.
A possible explanation for this is that the device noise drifts over time, and so the noisy basis $\{\cB_i'\}$ may become inaccurate, leading to bias in the results.
This is a particular issue for demonstrations such as ours where we ran many different circuits, each with a small number of shots, which is much slower than running a small number of circuits with many shots each.
The results could also possibly be improved by using more QPD samples (compare the size of the (simulated data) error bars of the third and fourth Trotter steps in \cref{HeisExpt}).

\subsection{QPD implementation of Trotterized imaginary time evolution}

We now define $\cT(\cdot)$ to be the imaginary-time evolution for the $n$-qubit Hamiltonian in \cref{eq:productapproxHam}
\begin{align}
    \cT(\cdot) = e^{-\beta H}(\cdot)e^{-\beta H}.
\end{align}
We cannot apply \cref{alg:est-rescaled-exp} directly to $\cT(\cdot)$ since the task of obtaining the quasiprobability decomposition over an $n$-qubit basis set $(|\{\cB_i\}| = \mathcal{O}(16^n))$ by solving a linear program can quickly become intractable for large $n$. Accordingly, we first decompose the $n$-qubit operation $e^{-\beta H}$ to a sequence of $k$-qubit operations by using the $1^{\mathrm{st}}$-order Trotter approximation: 
\begin{align}\label{eq:productapproxHam}
    e^{-\beta H} &\approx \left( e^{-\beta H_1/r}e^{-\beta H_2/r}\ldots e^{-\beta H_L/r} \right)^r.
\end{align}
Since each $H_l$ is $k$-local, we obtain the quasiprobability decomposition of each local ITE operation $e^{-\beta/r H_l}$ separately and implement a Trotterized ITE using \cref{alg:est-repeated}. 
Let $\cT_l(\cdot)$ be the ITE operation for the $k$-local Hamiltonian $H_l$
\begin{align}
    \cT_l(\cdot) = e^{-\beta H_l/r} (\cdot) e^{-\beta H_l/r}
\end{align}
We require the QPDs of each of the linear maps $\cT_1, \ldots , \cT_L$ into a set of operations $\{\cB_i\}$ as in \cref{eq:qpd-gen-map}. Suppose that we have computed these decompositions with the associated parameters $\gamma_1, \ldots , \gamma_L$ respectively. 
Then, we define the $n$-qubit map $\mathcal{P}$ as the composition of the maps $(\cT_1,...,\cT_L)$ corresponding to the $1^{\mathrm{st}}$ order Trotter approximation in \cref{eq:productapproxHam}.
For a Hermitian observable $A$ with $\|A\|\le 1$, it follows from \cref{lem:qpd-cost-repeated} that implementing $\mathcal{P}$ using \cref{alg:est-repeated} returns an estimate of the re-scaled expectation value $\langle A \rangle_{\mathcal{P}}$ within additive error $\epsilon \cdot (\tr{\left[\mathcal{P}(\rho)\right]})^{-1}$ using $\mathcal{O}(\gamma_1^{2r}...\gamma_L^{2r}\epsilon^{-2})$ circuits. 

In our implementation we only consider a translation-invariant Hamiltonian, i.e., each local interaction term in $H$ is the same. As a result, each of the exponentials $e^{-\beta H_l /r}$ has the same decomposition in a given basis with an associated parameter $\gamma$. Now, the Trotter approximation causes the expectation value $\langle A \rangle_{\mathcal{P}}$ to differ from the target imaginary-time expectation $\langle A \rangle_{\cT}$, but this difference can be bounded using known results~\cite{childs2021theory}. We show in \cref{app:ite-cost-proofs} that for any observable $A$,
\begin{align}
    |\trace{[A\cT(\rho)]} - \trace{[A\mathcal{P}(\rho)]}| &= \mathcal{O}\left(\frac{\beta^2 L^2 }{r} e^{\beta L/r} \right) \label{eq:err-obs-trot}  \\
    |\trace{[\cT(\rho)]} - \trace{[\mathcal{P}(\rho)]}| &= \mathcal{O}\left(\frac{\beta^2 L^2 }{r} e^{\beta L/r} \right) \label{eq:err-tr-trot}
\end{align} 
again assuming $\|A\| \le 1$.
Note that we can choose 
\begin{align}
r = \mathcal{O}\left(\tfrac{\beta^2 L^2}{\epsilon}\right) \label{eq:trotter-num}
\end{align}
to make the error on the RHS at most $O(\epsilon)$.
We know from \cref{lem:qpd-cost-repeated} that \cref{alg:est-repeated} can estimate the expectation value $\langle A \rangle_{\mathcal{P}} = \frac{\trace{(A\mathcal{P}(\rho))}}{\trace{(\mathcal{P}(\rho))}}$ to within error 
\begin{align}\label{eq:err-estimation-ite}
    \frac{2\epsilon}{|\trace{\left[\mathcal{P}(\rho)\right]}|}\left(1 + \frac{\epsilon}{|\trace{\left[\mathcal{P}(\rho)\right]}|}\right)
\end{align}
using 
\begin{align}
    \mathcal{O}\left(\frac{\gamma^{2Lr}}{\epsilon^2}\right)
\end{align}
circuits where each circuit consists of at most $\mathcal{O}(Lr)$ gates. Now, the algorithm is suitable only when the normalization $\trace{(\mathcal{P}(\rho))}$ is much larger than the target precision $\epsilon$. In that case, for the choice of $r$ in \cref{eq:trotter-num}, it follows that
\begin{align}
    |\langle A \rangle_{\mathcal{T}} - \langle A \rangle_{\mathcal{P}} | \le \mathcal{O}(\epsilon) \; ,
\end{align}
and moreover, the error in estimating $\langle A \rangle_{\mathcal{P}}$ via \cref{eq:err-estimation-ite} can be upper-bounded by
\begin{align}
    \frac{4\epsilon}{\trace{\left[\cT(\rho)\right]}}\left(1 + \frac{4\epsilon}{\trace{\left[\cT(\rho)\right]}}\right).
\end{align}
Even though the cost as stated scales as $\gamma^{2Lr}$, we note that the factor $\gamma$ depends on the local exponential $e^{-\beta H_i/r}$ and is therefore likely to decrease as we increase the Trotter number $r$.
Moreover, our error bounds above can likely be further tightened, in particular because we appeal to standard additive-error bounds for Trotter formula which may not be well-suited for non-trace-preserving imaginary-time evolution. It may also be possible to improve the bounds by using multiplicative-error estimates for Trotter-Suzuki formula~\cite{bravyi2017polynomial,childs2021theory}. We leave a detailed analysis of the complexity of our algorithm to future work.

\subsubsection{Estimating thermal expectation values}

We now show how our QPD implementation of imaginary-time evolution can be used to compute expectation values of observables in the Gibbs state.
Recall that the Gibbs state is defined to be
\begin{equation}
    \rho_\beta = \frac{e^{-\beta H}}{\mathcal Z_\beta},
\end{equation}
where $H$ is the Hamiltonian of the quantum system, $\beta$ is the inverse temperature and $\mathcal Z_\beta = \trace {(e^{-\beta H})}$ is the partition function. Preparing the Gibbs state and estimating thermal expectation values for general local Hamiltonians are known to be difficult tasks, and therefore unlikely to admit efficient algorithms in general~\cite{bravyi2022quantum}. 

Instead of directly preparing the (mixed) Gibbs state, we seek to prepare a \emph{thermal pure quantum state} (TPQ) \cite{sugiura2013canonical}. The latter is a suitably constructed $n$-qubit pure state $|\psi \rangle$ which, with high probability, can be used to estimate the Gibbs state expectation values for a set of observables, up to a tunable precision $\epsilon$. 
Formally, the TPQ state for a given set of observables $\{ \hat O \}$, is defined as a pure state $|\psi \rangle$ that satisfies
\begin{equation}\label{defn-tpq-state}
    \Pr [
    \langle \psi | \hat O | \psi \rangle 
    - \trace (\rho_\beta \hat O)
    \geq \epsilon
    ]
    \leq f(N, \epsilon),
\end{equation}
where $\lim_{N \rightarrow \infty} f(N, \epsilon) = 0$.~\cite{sho2012thermal,sho2013canonical}
It is known that applying imaginary-time evolution to a state chosen uniformly at random from the Haar measure gives a TPQ state~\cite{richter2021simulating,powers2023exploring,sho2013canonical}. Since Haar-random states cannot be efficiently prepared on a quantum computer, we follow the approach of Ref.~\cite{coopmans2023predicting} where the initial Haar-random state is replaced by a random 3-design. Specifically, the authors in Ref.~\cite{coopmans2023predicting} apply a random Clifford operator to a computational basis state followed by imaginary-time evolution, preparing the state,
\begin{equation}\label{eq:clifford-tpq}
    | \psi_\beta \rangle =
    \frac{
        e^{-\beta H/2} U | 0 \rangle
    }{
        \sqrt{ \langle 0 | U^\dagger e^{-\beta H} U | 0 \rangle}
    },
\end{equation}
where $U$ is the random Clifford operator, and the state $| \psi_\beta \rangle$ is in normalized form. Note that a random $n$-qubit Clifford operator can be implemented using $\mathcal O (n^2)$ one- and two-qubit gates~\cite{bravyi2021haramard}, a significant gain over the exponential number of such gates required to generate a Haar-random unitary. Leveraging the 3-design property of the Clifford group, it can be shown ~\cite{coopmans2023predicting,watts2023Quantum} that the above state (Eqn~\ref{eq:clifford-tpq}) is a valid TPQ state satisfying the definition in \cref{defn-tpq-state}. 

Ref.~\cite{coopmans2023predicting} used algorithms based on quantum signal processing to implement the ITE operation to prepare the Clifford-random TPQ. Here, we implement ITE using our quasi-probabilistic approach based on applying \cref{alg:est-repeated} to a Trotterized approximation of ITE. We note that preparing TPQ states is a highly suitable application of \cref{alg:est-repeated} because the TPQ state error rapidly approaches zero as $n$ becomes large~\cite{coopmans2023predicting}.
In addition, for a $k$-local $n$-qubit Hamiltonian all that is needed is the QPDs of the local terms, which is much more tractable to compute than the QPD of the entire $n$-qubit Hamiltonian.

The method we outline here can be used to estimate Gibbs-state expectation values for any $k$-local Hamiltonian at a given temperature. A main advantage of our method is that we do not require \emph{any} ancilla qubits. Similar to quantum algorithms for Gibbs-state preparation based on block-encoding techniques \cite{poulin09sampling,chowdhury2017quantum,vanApeldoorn2020quantumsdpsolvers,gilyen2019singular}, we expect our method to require an exponentially long running-time in the worst-case. However, the above algorithms for state preparation require at least $n$ ancilla qubits to prepare the Gibbs state of an $n$-qubit Hamiltonian. Using the TPQ method with block-encoding approaches reduces the number of ancila qubits from $n$ to a constant, as was shown in Ref.~\cite{coopmans2023predicting}. Using our quasiprobabilistic approach eliminates the need for any ancillary qubits. In this respect, our method is comparable to the dissipative quantum Gibbs sampler of Ref.~\cite{zhang2023dissipative}. 
In general, there have been significant developments in quantum algorithms that prepare Gibbs states by simulating Lindbladian evolution \cite{Chen2023QuantumTS,rall2023rounding,shtanko2023preparingthermalstatesnoiseless,ding2025gibbs}. With the exception of Ref.~\cite{shtanko2023preparingthermalstatesnoiseless}, these algorithms are designed for fault-tolerant quantum computers and require non-trivial subroutines. 

\paragraph{Numerical simulations.}
We test the performance of our technique in this case by simulating the preparation of TPQ states using \cref{alg:est-repeated} and comparing their performance with that of TPQ states prepared using ideal ITE.
Specifically, we consider TPQ states of the form $|\psi_i\rangle = e^{-0.02H}U_i|0\rangle^{\otimes n}$, where $H$ is the 1D Heisenberg Hamiltonian on $n\in\{4,\ldots,8\}$ qubits, and $U_i$ is a randomly-chosen Clifford unitary. 
We generate 10 such random states and average the error in the thermal expectation value, $|\langle \psi_i | O_j | \psi_i \rangle - \trace(|\psi_\beta\rangle\langle\psi_\beta|O_j)|$, over 30 randomly chosen Pauli observables, $O_j$.
For each simulated TPQ state, we use $N \in \{1024, 9400, 51200, 409600, 1638400\}$ QPD samples for $n \in \{4, 5, 6, 7, 8\}$ qubits respectively.
The results are shown in \cref{fig:tpq-demo} and show good agreement between our simulation and states prepared using ideal ITE.

\section{Conclusion} \label{sec:conc}

In this work, we propose using quasiprobability decompositions to implement imaginary time evolution, a useful primitive in many quantum algorithms.
This ``algorithmic'' use of quasiprobability decompositions inherits the error mitigation properties of related techniques such as probabilistic error cancellation.
Our method also extends to implementing linear Hermiticity-preserving maps, of which imaginary time evolution is one example.
We provide algorithms for estimating the expectation values of observables with respect to states evolved under one or more applications of a given map.
Using our algorithms, we can estimate the expectation value of an observable for imaginary-time evolved states.
This has many applications including preparing thermal pure quantum states, which can be used to estimate the Gibbs state expectation value of a set of observables.
We simulate the performance of our method for this task, and find good agreement with the ideal imaginary time evolution operation.
Our method is well-suited to near-term hardware, as it requires only 1- and 2-qubit gates and measurement, and the quasiprobability decomposition can be performed over a device's native (noisy) gateset.
As a proof-of-principle demonstration, we implement imaginary time evolution using a superconducting qubit processor (provided by IBM), and use this to estimate the ground state energy of the 2-qubit Heisenberg Hamiltonian.

A natural next step will be to extend our proof-of-principle demonstration on hardware to Hamiltonians on more than two qubits.
Although this will entail increased sampling costs, when one considers a more specific use-case, namely of thermal pure quantum states to estimate (thermal) expectation values, the error in the expectation value decays rapidly with the number of qubits, a promising sign for application of our method on near-term devices.
There are two hardware-level optimizations that would make the methods presented in this work more efficient.
First, in our demonstration using IBM hardware, if a mid-circuit measurement the incorrect result, i.e., post-selection is unsuccessful, then we still execute the remainder of the circuit because of the limitations of the measurement-based feedforward.
If instead one has the ability to apply quantum circuits based on the results of mid-circuit measurements, then this waste of resources can easily be avoided.
Second, sampling all of the required circuits in software and then loading each circuit individually into hardware is highly inefficient.
If instead the sampling is implemented at the level of the field-programmable gate-array control system (as was recently demonstrated for randomized compiling~\cite{fruitwala2024hardware}), then we expect that the overall runtime of the circuits will be significantly reduced. 

\section*{Acknowledgments}

The authors thank David Gosset, Joel Klassen, Raymond Laflamme, and Christophe Piveteau for enlightening discussions. 
This research was enabled in part by support provided by \href{https://computeontario.ca}{Compute Ontario} and the \href{https://alliancecan.ca}{Digital Research Alliance of Canada}. 
This work was completed while AR was a doctoral candidate at the University of Waterloo.
Part of this research was conducted while AC was affiliated with the University of Waterloo. NC and AC were partly supported by the Natural Sciences and Engineering Research Council of Canada (NSERC) under grant RGPIN-2019-04198.
Research at Perimeter Institute
is supported in part by the Government of Canada through
the Department of Innovation, Science and Economic Development Canada and by the Province of Ontario through the Ministry of Colleges and Universities.

\clearpage
\newpage
\bibliographystyle{quantum}

\appendix

\section{Choi representation and decompositions for linear maps}\label{app:choi_qpd}

The Choi representation~\cite{choi1975completely,jamiolkowski1972linear} (a.k.a Choi–Jamiołkowski isomorphism) captures many important properties of linear maps, hence plays a vital role in quantum information science~\cite{watrous2018theory}.
The Choi representation $C(\cT)$ of a linear map $\cT: \mathbb{B}_{\mathcal{H}_1} \to \mathbb{B}_{\mathcal{H}_2} $ is defined as
\begin{equation}\label{eq:choi}
    C(\cT) := \cT\otimes I (\ket{\psi_+}\bra{\psi_+}),
\end{equation}
where $\mathbb{B}_{\mathcal{H}_1}$ and $\mathbb{B}_{\mathcal{H}_2}$ are bounded operators on the Hilbert space $\mathcal{H}_1$ and $\mathcal{H}_2$ respectively, 
$\ket{\psi_{+}} = \sum_{i=1}^d\ket{ii}\in \mathcal{H}_1 \otimes \mathcal{H}_1$ is an unnormalized maximally entangled state. 
The map $\cT$ is Hermiticity-preserving (HP) iff $C(\cT)$ is a Hermitian matrix. 
The map $\cT$ is completely positive (CP) iff $C(\cT) \ge 0$ (i.e. $C(\cT)$ is positive semi-definite)~\cite{watrous2018theory}.

Choi representation $C(\cT)$ preserves the linear combination of maps, meaning that
\[
C(\cT) = \sum_i q_i C(\cB_i)
\]
if $\cT = \sum_i q_i \cB_i$.
Therefore the QPD decomposition can be done within the Choi representation. 
For HP linear maps, the coefficients $q_i$ can be chosen to be real numbers with a Hermitian basis. 
Note that CP maps are a subset of all HP maps.
Since ITE is not only HP, it is also CP. Therefore, the basis elements, $\{\cB_i\}$, that we use for demonstrations (see~\cref{table:EBLBasis}) are all CP, the coefficients $q_i$ are all real. 

The basic idea of QPD is decomposing the desired operation $\cT$ as a linear combination of other physically realizable operations $\{\cB_j\}$, then sampling each operation $\cB_j$ according to the probability $p_j = q_j / \sum_i |q_i|$ calculated from coefficients $\{q_i\}$.
The ``free operations'', i.e. $\gamma = \sum_i |q_i| = 1$, are the convex combination $\text{Cov}(\cB_j)$ of $\{\cB_j\}$.
For example, the free operations for the 2 fixed operations $\mathcal{B}_1$ and $\mathcal{B}_2$ would be the line segment before them, and the operation they can express is any operation on the line. 
The choice of $\{\cB_j\}$ is critical regarding the implementation of QPD.

One can choose the physical operations $\{\cB_j\}$ to be a fixed finite set, then the QPD method can achieve all linear combinations of the basis. For a $n$-qubit linear operation, the degree of freedom is $4^{2n}$.
Hence, a linearly independent set of bases with cardinality $4^{2n}$ can express all linear operations for $n$ qubits.
For example, \cref{table:EBLBasis}, proposed in~\cite{endo_mitigating_2019}, is complete for single-qubit operations, and the tensor product of these bases is also complete for more qubits.
The basis in~\cref{table:EBLBasis} comprises 10 unitaries and 6 measurements. The unitary operators are generated by $S$ gate and Hardmard gate $H$, and are thus Clifford. The measurements are generated by $S$, $H$ and projection $P_0$. They are all relatively easy and directly implementable on hardware and relatively easy concerning classical simulation.

\begin{table}[ht!]
    \centering
    \begin{tabular}{|l c l c l|}
        \hline
         $[1]$ & & & &   \\
         $[\sigma^X]$ & & & $=$ & $[H][S]^2[H]$ \\
         $[\sigma^Y]$ & & & $=$ & $[H][S]^2[H][S]^2$ \\
         $[\sigma^Z]$ & & & $=$ & $[S]^2$ \\
         $[R_X]$ & $=$ &$[\frac{1+\iu \sigma^X}{\sqrt{2}}]$ & $=$ & $[H][S]^3[H]$ \\
         $[R_Y]$ & $=$ &$[\frac{1+\iu \sigma^Y}{\sqrt{2}}]$ & $=$ & $[S][H][S]^3[H][S]^3$ \\
         $[R_Z]$ & $=$ &$[\frac{1+\iu \sigma^Z}{\sqrt{2}}]$ & $=$ & $[S]^3$ \\
         $[R_{YZ}]$ & $=$ &$[\frac{\sigma^Y+\sigma^Z}{\sqrt{2}}]$ & $=$ & $[H][S]^3[H][S]^2$ \\
         $[R_{ZX}]$ & $=$ &$[\frac{\sigma^Z+\sigma^X}{\sqrt{2}}]$ & $=$ & $[S]^3[H][S]^3[H][S]^3$ \\
         $[R_{XY}]$ & $=$ &$[\frac{\sigma^X+\sigma^Y}{\sqrt{2}}]$ & $=$ & $[H][S]^2[H][S]^3$ \\
         $[\pi_X]$ & $=$ &$[\frac{1+\sigma^X}{2}]$ & $=$ & $[S][H][S][H][P_0][H][S]^3[H][S]^3$ \\
         $[\pi_Y]$ & $=$ &$[\frac{1+\sigma^Y}{2}]$ & $=$ & $[H][S]^3[H][P_0][H][S][H]$ \\
         $[\pi_Z]$ & $=$ &$[\frac{1+\sigma^Z}{2}]$ & $=$ & $[P_0]$ \\
         $[\pi_{YZ}]$ & $=$ &$[\frac{\sigma^Y+\iu\sigma^Z}{2}]$ & $=$ & $[S][H][S][H][P_0][H][S][H][S]^3$ \\
         $[\pi_{ZX}]$ & $=$ &$[\frac{\sigma^Z+\iu\sigma^X}{2}]$ & $=$ & $[H][S]^3[H][P_0][H][S][H][S]^2$ \\
         $[\pi_{XY}]$ & $=$ &$[\frac{\sigma^X+\iu\sigma^Y}{2}]$ & $=$ & $[P_0][H][S]^2[H]$ \\
         \hline
    \end{tabular}
    \caption{\textbf{EBL basis}~\cite{endo2018Practical}. A complete basis for single-qubit linear maps that contains 10 trace-preserving elements ($[1]$ to $[R_{XY}]$) and 6 non-trace-preserving elements ($[\pi_X]$ to $[\pi_{XY}]$), where we use the notation $[U](\cdot)\coloneqq U(\cdot)U^\dagger$. The map $[P_0]$ refers to projection onto the state $\ket{0}$.
    }
    \label{table:EBLBasis}
\end{table}

The number of degrees of freedom in the problem can be reduced with constraints on the map $\cT$. For instance, knowing $\cT$ is TP reduces the degrees of freedom to  $d^4 - d^2 + 1$. In~\cite{takagi2021optimal}, the author proposed a new basis set (see~\cref{table:takagi}) for $2$-qubit CPTP maps, which reduced the number of basis from 256 to 241. 
Note that, as shown in~\cref{fig:gammavbeta}, decomposing the imaginary time evolution according to the EBL basis has a higher sampling cost compared to the Takagi basis. 
It is not only because the Takagi basis contains more entangling operations. As mentioned above, the geometry of the basis sets also affects the cost.

\begin{table}[ht!]
    \centering
    \begin{tabular}{|c|c|}
    \hline
    $\cB_{1}$--$\cB_{169}$ & $\{\cB_i\}_{i=1}^{13}\otimes \{\cB_i\}_{i=1}^{13}$\\
    $\cB_{170}$--$\cB_{178}$ & $\mathcal{CX}$ + conjugation with $\mK_{1,2},\mK_{1,2}^\dagger$ \\
    $\cB_{179}$--$\cB_{187}$ & $\mX_1\circ\mathcal{CX}\circ\mX_1$ + conjugation with $\mK_{1,2},\mK_{1,2}^\dagger$ \\
    $\cB_{188}$--$\cB_{196}$ & $\mathcal{CS}$ + conjugation with $\mK_{1,2},\mK_{1,2}^\dagger$ \\
    $\cB_{197}$--$\cB_{205}$ & $\mathcal{CH}$ + conjugation with $\mK_{1,2},\mK_{1,2}^\dagger$ \\
    $\cB_{206}$--$\cB_{214}$ & $\mathcal{C_H X}$ + conjugation with $\mK_{1,2},\mK_{1,2}^\dagger$ \\
    $\cB_{215}$--$\cB_{223}$ & $\mathcal{C X}\circ\mH_1$ + conjugation with $\mK_{1,2},\mK_{1,2}^\dagger$ \\
    $\cB_{224}$--$\cB_{226}$ & $\mathcal{S}_W$ + conjugation with $\mK_2,\mK_2^\dagger$ \\
    $\cB_{227}$--$\cB_{232}$ & $i\mathcal{S}_W$ + conjugation with $\mK_{1,2}, \mK_{2}^\dagger$ \\
    $\cB_{233}$--$\cB_{241}$ & $\mathcal{S}_W\circ\mH_1$ + conjugation with $\mK_{1,2},\mK_{1,2}^\dagger$ \\
    \hline
    \end{tabular}
    \caption[Universal basis for QPD of $2q$ CPTP maps (Takagi basis)]{\textbf{Takagi Basis}~\cite{takagi2021optimal}. A complete basis for 2-qubit CPTP maps. 
    $\{\cB_i\}_{i=1}^{13}$ are the first 13 elements of the EBL basis (\cref{table:EBLBasis}).
    $\mathcal{CX}$, $\mathcal{CS}$, $\mathcal{CH}$, $\mathcal{C_H X}$ are channel versions of CNOT, controlled-phase, controlled-Hadamard, and NOT controlled with $\pm 1$ eigenstates of the Hadamard gate, respectively.
    $\mathcal{K}_i$ is the channel version of $K \coloneqq SH$ acting on qubit $i$. 
    $\mS_W$ and $i\mS_W$ are channel versions of the SWAP and iSWAP gates. 
    $\mU$ conjugated by $\mV$ denotes $\mV^\dagger \circ \mU \circ \mV$ and ``$\mU$ + conjugation with $\mK_{1,2}, \mK_{1,2}^\dagger$'' collects the nine conjugations of $\mU$ by
    $I_{12}$, $\mK_1$, $\mK_2$, $\mK_1^\dagger$,
    $\mK_2^\dagger$, $\mK_1\otimes\mK_2$,
    $\mK_1\otimes\mK_2^\dagger$,
    $\mK_1^\dagger\otimes\mK_2$, and
    $\mK_1^\dagger\otimes \mK_2^\dagger$.}
    \label{table:takagi}
\end{table}

On the other hand, one can also try decomposing $\cT$ to more general sets of physical operations. 
It is known that any HPTP map $\mathcal{T}$ can be decomposed to a linear combination of two CPTP maps $\mathcal{T} = q_1 \mathcal{T}_1 - q_2 \mathcal{T}_2$~\cite{jiang2021physical}. Similarly, a HP map can be decomposed by the subtraction of two CP maps. 
This also allows us to implement QPD without fixing a set of basis operations~\cite{jiang2021physical}. Similarly, allowing general quantum instruments can also empower us to implement QPD for broader classes of maps. 
In~\cite{zhao2023power}, authors proved the optimal cost of QPD will achieved by allowing even one quantum instrument.
The optimal sampling cost $\gamma$ for implementing with measurement-controlled post-processing (a generalized QPD method) equals the diamond norm $\|\cT\|_\Diamond$ of the desired map $\cT$~\cite{zhao2023power}.
We analyze the optimal decomposition of ITE in~\cref{app:optimalcost}.

\section{Basics of imaginary time evolution}~\label{app:ITE}

The imaginary time evolution without renormalization is a CP non-TP map, 
$$\mathcal{T}(\rho) = e^{-\beta H}\rho e^{-\beta H},$$
where $\beta>0$ and $\rho\in \mathcal{D}(\mathcal{H}^d)$.

Diagonalizing the ITE operator 
\begin{equation}\label{eq:diag}
e^{-\beta H} = U\text{diag}[e^{-\beta \lambda_i}]U^\dag,
\end{equation}
where $\text{diag}[e^{-\beta\lambda_i}]$ is a diagonal matrix and $\{\lambda_i\}$ are eigenvalues of $H$ with the increasing order ($\lambda_0$ is the ground state energy). It is a convention to assume that $H$ is positive semi-definite (i.e. $\lambda_i \ge 0$). However, we need to be slightly careful when working with non-unitary evolution.

From~\cref{eq:diag}, when $\beta \to \infty$, if $\lambda_0 = 0$, the ITE operator is the projection onto the space of the ground states; if $\lambda_0 > 0$, the limit $\lim_{\beta\to\infty} e^{-\beta H}$ will become a zero matrix; if $H$ has negative eigenvalues then $\lim_{\beta\to\infty} e^{-\beta H}$ is an unbounded operator that projects onto the ground space and multiplies the unbounded scalar $e^{-\beta\lambda_0}$.

\section{Optimal sampling cost for ITE}\label{app:optimalcost}

If we allow ourselves the ability to optimize over the choice of the basis elements $\{\cB_i\}$, we can upper bound the cost of implementing $e^{-\beta H}$ in terms of $\beta$ and the operator norm of $H$.
The diamond norm $\|\mathcal{T}\|_\Diamond$ of the HP map $\mathcal{T}$ gives a lower bound for the QPD sampling cost, also is the optimal cost of a generalized QPD implementation~\cite{zhao2023power}. In the following, we analyze the diamond norm for the imaginary time evolution.

The imaginary time evolution without renormalization is a CP non-TP map, 
$$\mathcal{T}(\rho) = e^{-\beta H}\rho e^{-\beta H},$$ 
where $\beta$ is fixed and $\beta>0$ and $\rho\in \mathcal{D}(\mathcal{H}^d)$.
The diamond norm of $\mathcal{T}$ is
\[
\da{\mathcal{T}} := \sup_{n\ge 1} \|\mathcal{T}\otimes\mathbb{I}_n\|_{tr},
\]
where $\|\mathcal{T}\otimes\mathbb{I}_n\|_{tr} = \max_\rho \|\mathcal{T}\otimes\mathbb{I}_n (\rho)\|_{tr}$, and $\rho \in \mathcal{D}(\mathcal{H}^d\otimes\mathcal{H}^n)$. Therefore,
\[
\da{\mathcal{T}} = \sup_{n\ge 1} \max_\rho \|\mathcal{T}\otimes\mathbb{I}_n (\rho)\|_{tr} = \max_{\ket{\psi}}\trn{\mathcal{T}\otimes\mathbb{I}_d(\ket{\psi}\bra{\psi})},
\]
where $d$ is the dimension of the system, and now $\rho \in \mathcal{D}(\mathcal{H}^d\otimes\mathcal{H}^d)$.
The second equality is due to the convexity of trace norm and Schmidt decomposition.

Perform the spectral decomposition of the Hamiltonian $H$, $H = U\Sigma U^\dag.$
For imaginary time evolution $\mathcal{T}(\rho) = e^{-\beta H}\rho e^{-\beta H}$, its diamond norm is
\begin{align}
    \da{\mathcal{T}} &= \max_{\ket{\psi}}\trn{[(Ue^{-\beta\Sigma}U^\dag)\otimes\mathbb{I}_d](\ket{\psi}\bra{\psi})[(Ue^{-\beta\Sigma}U^\dag)\otimes\mathbb{I}_d]} \nonumber\\
    &= \max_{\ket{\psi}} \trace(\sqrt{\ket{\phi}\bra{\phi}\ket{\phi}\bra{\phi}}) \nonumber\\
    & = \max_{\ket{\psi}} a_\phi
\end{align}
where $\ket{\phi}:= [(Ue^{-\beta\Sigma}U^\dag)\otimes\mathbb{I}_d]\ket{\psi}$ (note that $\ket{\phi}$ is unnormalized since $\mathcal{T}$ is not trace-preserving) and $a_\phi := \langle\phi|\phi\rangle$. Therefore, the diamond norm of ITE is 
\begin{equation}\label{eq:diamond_ITE}
    \da{\mathcal{T}} = \max_{\ket{\phi}} a_\phi = e^{-2\beta\lambda_{0}},    
\end{equation}
where $\lambda_{0}$ is the smallest eigenvalue of $H$.

If $H$ is positive definite (PD), the eigenvalues $\lambda_i > 0$, $\da{\mathcal{T}} = e^{-2\beta\lambda_{0}}$. As $\beta$ increases, the optimal sampling cost $\gamma_\text{opt}$ gets smaller; if $H$ is positive semi-definite (PSD), $\lambda_{0} = 0$, $\da{\mathcal{T}} = 1,$ i.e. the optimal cost $\gamma_\text{opt}$ will not change with $\beta$; if $H$ is indefinite, the smallest eigenvalue $\lambda_{0}$ is negative, then the cost will increase as $\beta$ increases. %

Based on our analysis, introducing a scaled identity operator into the Hamiltonian appears to improve the sampling cost. However, since imaginary-time evolution is not trace-preserving, the signal strength $\tr(\mathcal{T}(\rho))$ may not be maintained. 
In the following, we study the balance between sampling cost and signal strength.

\begin{prop}\label{prop:bound_trace}
Given a fixed Hamiltonian $H$ and any given inverse temperature $\beta$, the trace $\tr(\mathcal{T}(\rho))$ of the output state from ITE is upper bounded by the diamond norm $\da{\mathcal{T}}$, and lower bounded by the rescaled overlap $\tr(\rho_{gs}\rho)$ between the input state $\rho$ and ground state $\rho_{gs}$ of $H$ (with ground state energy $\lambda_0$), i.e.
\begin{equation}\label{eq:bound_trace}
    \da{\mathcal{T}} \ge \trace(\mathcal{T}(\rho)) \ge e^{-2\beta\lambda_0}\trace(\rho_{gs}\rho).
\end{equation}
\end{prop}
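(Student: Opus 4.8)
The plan is to prove the two inequalities separately, each following from elementary positivity arguments together with the identity $\da{\mathcal{T}} = e^{-2\beta\lambda_0}$ already established in \cref{eq:diamond_ITE}.

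For the upper bound, I would first observe that $\mathcal{T}$ is completely positive, so for a density matrix $\rho$ the output $\mathcal{T}(\rho) = e^{-\beta H}\rho e^{-\beta H}$ is positive semidefinite. Consequently its trace norm equals its trace, $\trn{\mathcal{T}(\rho)} = \trace(\mathcal{T}(\rho))$. The definition of the diamond norm takes a supremum over all ancilla dimensions and all input states; restricting to the trivial ancilla ($n=1$) and the particular input $\rho$ gives the single inequality $\trace(\mathcal{T}(\rho)) = \trn{\mathcal{T}(\rho)} \le \da{\mathcal{T}}$, which is the left-hand bound.

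For the lower bound, I would use cyclicity of the trace to write $\trace(\mathcal{T}(\rho)) = \trace(e^{-2\beta H}\rho)$ and then expand in the eigenbasis of $H$. Diagonalizing $H = \sum_i \lambda_i \ketbra{v_i}$ with $\lambda_0 \le \lambda_1 \le \cdots$ gives $\trace(e^{-2\beta H}\rho) = \sum_i e^{-2\beta\lambda_i}\bra{v_i}\rho\ket{v_i}$. Every summand is nonnegative, since $\rho \ge 0$ forces $\bra{v_i}\rho\ket{v_i} \ge 0$ and each weight $e^{-2\beta\lambda_i}$ is positive. Dropping all terms except those belonging to the smallest eigenvalue therefore yields $\trace(e^{-2\beta H}\rho) \ge e^{-2\beta\lambda_0}\sum_{i:\,\lambda_i = \lambda_0}\bra{v_i}\rho\ket{v_i} = e^{-2\beta\lambda_0}\,\trace(\rho_{gs}\rho)$, where $\rho_{gs}$ is the projector onto the ground space.

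I do not anticipate a genuine obstacle here: both bounds reduce to positivity of $\mathcal{T}(\rho)$ and of the spectral weights. The only point requiring a little care is a possibly degenerate ground space, which is handled by reading $\rho_{gs}$ as the full ground-space projector and retaining all spectral terms with $\lambda_i = \lambda_0$ in the final step.
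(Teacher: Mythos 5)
Your proof is correct and follows essentially the same route as the paper's: the upper bound by restricting the supremum in the definition of the diamond norm (using that $\mathcal{T}(\rho)\ge 0$ so trace norm equals trace), and the lower bound by expanding $\trace(e^{-2\beta H}\rho)$ in the eigenbasis of $H$ and dropping the nonnegative excited-state terms. Your only additions are to spell out the first inequality, which the paper dismisses as ``straightforward,'' and to treat a degenerate ground space explicitly via the ground-space projector.
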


\begin{proof}
    The first inequality is straightforward from the definition of the diamond norm. 
    The second inequality is also straightforward:
    \begin{align}
    \trace(\mathcal{T}(\rho)) &=  \trace(e^{-\beta H}\rho e^{-\beta H}) \nonumber\\
    &= \sum_i e^{-2\beta \lambda_i} \bra{i}U^\dag \rho U\ket{i} \label{eq:overlaps}
    \end{align}
    Every term in~\cref{eq:overlaps} is non-negative, and $e^{-2\beta\lambda_0}\trace(\rho_{gs}\rho)$ is only one term in the summation.
\end{proof}

Since we know that $\da{\mathcal{T}} = e^{-2\beta \lambda_0}$, \cref{eq:bound_trace} can be rewritten as 
\begin{equation}
    1 \ge e^{2\beta\lambda_0}\trace(\mathcal{T}(\rho)) \ge \trace(\rho_{gs}\rho).
\end{equation}
Similar to the diamond norm $\da{\mathcal{T}}$, the signal strength $\trace(\mathcal{T}(\rho))$ is also affected by the scalar $e^{-2\beta\lambda_0}$.
When $\lambda_0>0$, as $\beta$ increases, the signal also decreases. Any (unnormalized) expectation $\trace(\mathcal{T}(\rho)A)$ for a bounded observable $A$ will vanish as $\beta$ approaches infinity. Therefore the sampling cost for such values with a fixed precision $\epsilon$ also approaches zero. 
Conversely, when $\lambda_0 <0$, we get a signal amplification, causing an increase in sampling cost. 
We can avoid this issue when $\lambda_0 = 0$, i.e. $H$ is PSD.
Therefore, the optimal adjustment for adding a scaled identity term $\alpha I$, where $\alpha\in\mathbb{R}$, to the Hamiltonian $H$ is thus the one that shifts the ground-state energy to zero.

Determining the appropriate scaling factor $\alpha$ can be challenging in general, given that $\lambda_0$ is unknown. However, for a frustration-free Hamiltonian, a global rescaling can be obtained by adjusting each local term accordingly. When implementing the global ITE via Trotterization into local components, each local evolution step drives the state towards the ground state of that local term. Because the global Hamiltonian is frustration-free, this process ensures that the signal strength is maintained throughout the entire procedure.
This does not extend to frustrated Hamiltonians, thereby constraining the general applicability of our method. Nonetheless, it is important to note that frustration-free Hamiltonians can still encode computationally hard problems. A notable example is the Feynman–Kitaev mapping~\cite{kitaev2002classical}, which can map complex computational problems into such Hamiltonians.

\section{Estimating expectation values: proofs}\label{app:algcosts}
\begin{proof}[\bf Proof of \cref{lem:qpd-cost}]
    Recall \cref{eq:qpd-gen-map}
    \begin{align}
        \trace{\left[A \cT(\rho) \right]} =& \sum_{i=1}^D \sgn(q_i) \cdot |q_i| \cdot \trace{\left[A \cB_i(\rho)\right]}.
    \end{align}
    Let $I$ be a random variable that takes values in $\{1,..., D\}$ with probability  
    \begin{align}
        \text{Pr}[I=i] = \frac{q_i}{\gamma}.
    \end{align}  
    Let $L_I\in \{0, 1\}$ be the indicator variable that is 1 when the operation $\cB_I$ is successfully implemented, and let $Y_I$ denote the outcome of measuring the observable $A$.
    Note that
    \begin{align}
        \text{Pr}[L_i = 1] = 1, & \text{ if } \mathcal{B}_{i} \text{ is trace preserving,}\\
        \text{Pr}[L_i = 1] =\trace{\left[\cB_i(\rho)\right]} \leq 1, & \text{ if } \mathcal{B}_{i} \text{ is non-trace preserving.}
    \end{align}
    Now, we define the random variables 
    \begin{align}\label{eq:random-variables}
        F = \gamma \sgn(q_I) L_I Y_I \qquad G = \gamma \sgn(q_I) L_I .
    \end{align} 
    Computing their expectations below as
    \begin{align}
        \ex[F] &= \gamma \sum^{D}_{i=1}\sum_{L_i}\ex_{Y_i} \sgn(q_i) \text{Pr}[I = i] \text{Pr}[L_i = 1]Y_i \\
        &= \gamma \sum^{D}_{i=1} \frac{|q_i|}{\gamma} \sgn(q_i) \trace[A\cB_i(\rho)]\\
        &=  \trace[A\cT(\rho)] \label{eq:expct-obs}
    \end{align}
    and
    \begin{align}
        \ex[G] &= \gamma \sum^{D}_{i=1}\sum_{L_i} \sgn(q_i) \text{Pr}[I = i] \text{Pr}[L_i = 1]  \\
        &= \gamma \sum^{D}_{i=1} \frac{|q_i|}{\gamma} \sgn(q_i) \trace[\cB_i(\rho)]\\
        &=  \trace[\cT(\rho)] \label{eq:expct-norm},
    \end{align}
    we see that the rescaled expectation value is given by 
    \begin{align}
        \langle A\rangle_{\cT} = \frac{\ex[F]}{\ex[G]}
    \end{align}
    proving the correctness of \cref{alg:est-rescaled-exp}.
    
    We use standard concentration inequalities to bound the number of measurements needed to estimate $\langle A\rangle_{\cT}$ within a given precision, We note that both $F$ and $G$ lie between $\pm \gamma$. For $N$ independent trials (each trial consists of sampling an index $i$, running the corresponding quantum circuit and accepting/rejecting, and measuring $A$), define $\bar{F} = \frac{1}{N}(F_1+F_2+\dots+F_N)$ and $\bar{G} = \frac{1}{N}(G_1+G_2+\dots+G_N)$. Then, Hoeffding's inequality implies
    \begin{align}\label{eq:concentration-bound-1-app}
        \Pr\left[|\bar{F} - \ex[F]| \ge \epsilon \right] &\le \exp \left(-\tfrac{N\epsilon^2}{2\gamma^2} \right) \\
        \label{eq:concentration-bound-2-app}
        \Pr\left[|\bar{G} - \ex[G]| \ge \epsilon \right] &\le \exp \left(-\tfrac{N\epsilon^2}{2\gamma^2} \right).
    \end{align}
    Choosing
    \begin{align}\label{eq:num-samples-single}
        N = \frac{2\gamma^2}{\epsilon^2}\log(1/\delta)
    \end{align}
    ensures that the RHS of \cref{eq:concentration-bound-1-app,eq:concentration-bound-2-app} are at most $\delta$. Then, we have with probability at least $(1-\delta)$ 
    \begin{align}
        \left|\frac{\bar{F}}{\bar{G}} - \frac{\mathbb{E}[F]}{\mathbb{E}[G]}\right| &= \left| \frac{\mathbb{E}[G](\bar{F} - \mathbb{E}[F]) + \mathbb{E}[F](\mathbb{E}[G] - \bar{G})}{\bar{G}\mathbb{E}[G]} \right|\\
        &\leq \left| \frac{\mathbb{E}[G](\bar{F} - \mathbb{E}[F])}{\bar{G}\mathbb{E}[G]} \right| + \left|\frac{\mathbb{E}[F](\mathbb{E}[G] - \bar{G})}{\bar{G}\mathbb{E}[G]} \right|\\
        &= \frac{1}{|\bar{G}|} \left(\epsilon + \epsilon\left|\frac{\mathbb{E}[F]}{\mathbb{E}[G]}\right|\right)\\
        &= \frac{2\epsilon}{|\bar{G}|} \label{eq:concentration-bound-ratio-1-app}
    \end{align}
    where we have used \cref{eq:expct-obs,eq:expct-norm} and that $\tfrac{\mathbb{E}[F]}{\mathbb{E}[G]}\le 1$. Moreover
    \begin{align}
        \frac{1}{|\bar{G}|} \le \frac{1}{|\mathbb{E}[G]-\epsilon|} \le \frac{1}{|\mathbb{E}[G]|}\left(1+\frac{\epsilon}{|\mathbb{E}[G]|}\right). \label{eq:denom-G-app}
    \end{align}
    Using \cref{eq:concentration-bound-ratio-1-app,eq:denom-G-app} and recalling \cref{eq:expct-obs} we see that
    \begin{align}\label{eq:concentration-bound-final-single-app}
        \left|\frac{\bar{F}}{\bar{G}} - \langle A \rangle_{\cT} \right| \le \frac{2\epsilon}{\left|\trace{\left[\cT(\rho)\right]}\right|}\left(1+\frac{\epsilon}{\left|\trace{\left[\cT(\rho)\right]}\right|}\right)
    \end{align}
    with probability at least $(1-\delta)$ for choice of $N$ as in \cref{eq:num-samples-single}.

\end{proof}

\begin{proof}[\bf Proof of \cref{lem:qpd-cost-repeated}]
    The proof essentially follows that of \cref{lem:qpd-cost} with more involved notation.
    First note that for a $\mathcal{E}$ given by $R$ repeated applications of $\cT$ we have
    \begin{align}
        \trace{\left[A \mathcal{E}(\rho) \right]} =& \sum_{i_1=1}^D\sum_{i_2=1}^D\cdots \sum_{i_R=1}^D \sgn(q_{i_1}q_{i_2}\cdots q_{i_R}) |q_{i_1}q_{i_2}\cdots q_{i_R}| \cdot \trace{\left[A \cB_{i_R}\left(\cdots\cB_{i_2}\left(\cB_{i_1}(\rho)\right)\right)\right]}.
    \end{align}
    We define $I_{i_r}$ be a random variable that takes values in $\{1,..., D\}$ with probability  
    \begin{align}
        \text{Pr}[I_{i_r}=i] = \frac{q_i}{\gamma}.
    \end{align}  
    Similar to the proof of \cref{lem:qpd-cost}, for each $i_r$ we define $L_{i_r}\in \{0, 1\}$ to be the indicator variable that is 1 when the operation $\cB_{i_r}$ is successfully implemented. We have $L_{i_r}=1$ if $B_{i_r}$ is trace-preserving and 
    \begin{align}
        \text{Pr}[L_{i_r} = 1] =\frac{\trace{\left[\cB_{i_r}(\cB_{i_{r-1}}(\cdots\cB_{i_1}(\rho)))\right]}}{\trace{\left[\cB_{i_{r-1}}(\cdots\cB_{i_1}(\rho))\right]}}, 
    \end{align}
    if $\cB_{i_r}$ is non trace-preserving. We let $Y$ be the random variable denoting measurement outcome of the observable $A$ and define 
    \begin{align}
        F &:= \gamma^R\sum_{i_1=1}^D\sum_{i_2=1}^D\cdots \sum_{i_R=1}^D \sgn(q_{i_1}q_{i_2}\cdots q_{i_R}) \cdot L_{i_R}\cdots L_{i_2}L_{i_1} Y \\
        G &:= \gamma^R\sum_{i_1=1}^D\sum_{i_2=1}^D\cdots \sum_{i_R=1}^D \sgn(q_{i_1}q_{i_2}\cdots q_{i_R}) \cdot L_{i_R}\cdots L_{i_2}L_{i_1}.
    \end{align}
    The expectation values of $F$ and $G$ are given by
    \begin{align}
        \ex[F] &= \sum_{i_1=1}^D\sum_{i_2=1}^D\cdots \sum_{i_R=1}^D \sgn(q_{i_1}q_{i_2}\cdots q_{i_R}) |q_{i_1}q_{i_2}\cdots q_{i_R}| \cdot \trace{\left[\cB_{i_R}\left(\cdots\cB_{i_2}\left(\cB_{i_1}(\rho)\right)\right)\right]} \nonumber \\
        &= \trace{\left[A\mathcal{E}(\rho) \right]},\label{ex:avg-unnorm-repeated} \\
        \ex[G] &= \sum_{i_1=1}^D\sum_{i_2=1}^D\cdots \sum_{i_R=1}^D \sgn(q_{i_1}q_{i_2}\cdots q_{i_R}) |q_{i_1}q_{i_2}\cdots q_{i_R}| \cdot \trace{\left[\cB_{i_R}\left(\cdots\cB_{i_2}\left(\cB_{i_1}(\rho)\right)\right)\right]} \nonumber \\
        &= \trace{\left[\mathcal{E}(\rho) \right]}\label{ex:norm-repeated}
    \end{align}
    and thus
    \begin{align}
        \langle A \rangle_\mathcal{E} = \frac{\ex[F]}{\ex[G]}
    \end{align}
    proving correctness of \cref{alg:est-repeated}. Now, both $F$ and $G$ lie between $\pm \gamma^R$ (recall that $\|A\|\le 1$). For $N$ independent trials (each trial consists of sampling a multiset of indices $\{i_1,i_2,\dots,i_R\}$, running the corresponding quantum circuit and accepting/rejecting, and finally measuring $A$), we define $\bar{F} = \frac{1}{N}(F_1+F_2+\dots+F_N)$ and $\bar{G} = \frac{1}{N}(G_1+G_2+\dots+G_N)$. Then Hoeffding's inequality implies
    \begin{align}\label{eq:concentration-bound-repeated}
        \Pr\left[|\bar{F} - \ex[F]| \ge \epsilon \right] &\le \exp \left(-\tfrac{N\epsilon^2}{2\gamma^{2R}} \right) \\
        \Pr\left[|\bar{G} - \ex[G]| \ge \epsilon \right] &\le \exp \left(-\tfrac{N\epsilon^2}{2\gamma^{2R}} \right).
    \end{align}
    Following the same analysis as in the proof of \cref{lem:qpd-cost} (c.f.\cref{eq:num-samples-single} to \cref{eq:concentration-bound-final-single-app}), we see that choosing 
    \begin{align}\label{eq:num-samples-repeated}
        N = \frac{2\gamma^{2R}}{\epsilon^2}\log(1/\delta)
    \end{align}
    ensures that
    \begin{align}\label{eq:concentration-bound-final-repeated}
        \left|\frac{\bar{F}}{\bar{G}} - \langle A \rangle_{\mathcal{E}} \right| \le \frac{2\epsilon}{\left|\trace{\left[\mathcal{E}(\rho)\right]}\right|}\left(1+ \frac{\epsilon}{\left|\trace{\left[\mathcal{E}(\rho)\right]}\right|}\right)
    \end{align}
    with probability at least $(1-\delta)$.
\end{proof}

\section{Trotterized ITE: error bounds}\label{app:ite-cost-proofs}
Here we determine the cost of estimating imaginary-time evolved expectation values to within a given precision. 

We define $P = e^{-\beta H}$ and denoted its $1^{\mathrm{st}}$ order Trotter approximation (see \cref{eq:productapproxHam}) as $T = \left(e^{-\frac{\beta}{r} H_1}e^{-\frac{\beta}{r}  H_2}...e^{-\frac{\beta}{r}  H_L}\right)^r$.
The corresponding maps are $\cT(\cdot)$ and $\mathcal{P}(\cdot)$ i.e., $\mathcal{P}(\cdot) = P(\cdot)P$ and $\cT(\cdot) = T(\cdot)T$.
From Theorem 6 in \cite{childs2021theory} it can be shown that
\begin{align}\label{eq:first-trot-error}
    || T - P || &= \mathcal{O}\left( \left(\sum^{L}_{l=1} || H_l || \beta \right)^2 e^{\sum^{L}_{l=1} ||H_l|| \beta} \right)\\
    &= \mathcal{O}\left(\frac{\beta^2 L^2}{r}  e^{\beta L/r} \right) \label{eq:trot=error-op}.
\end{align}
For an input state $\rho$, we can bound the error in the expectation value $\tr{\left[Ae^{-\beta H} \rho e^{-\beta H}\right]}$ for an observable $A$ as below
\begin{align}
    |\trace{[A\cT(\rho)]} - \trace{[A\mathcal{P}(\rho)]}| &= |\trace{[AT\rho T]} - \trace{[AP\rho P]}| \nonumber \\
    &= |\trace{[AT\rho T]} - \trace{[AT\rho P]} + \trace{[AT\rho P]} - \trace{[AP\rho P]}| \nonumber \\
    &= |\trace{[AT\rho (T- P)]} + \trace{[A(T-P)\rho P]}| \\
    &\leq |\trace{[AT\rho (T- P)]}| + |\trace{[A(T-P)\rho P]}|\\
    &\leq ||A|| \cdot ||T|| \cdot ||(T- P)|| + ||A|| \cdot ||(T-P)|| \cdot ||P||\\
    &\leq 2 ||e^{-\beta H}|| \cdot \mathcal{O}\left(\frac{\beta^2 L^2}{r} e^{\beta L/r} \right)\\
    &= \mathcal{O}\left(\frac{\beta^2 L^2 }{r} e^{\beta L/r} \right) \label{eq:err-obs-trotter}
\end{align}
where we have used \cref{eq:trot=error-op} in the second-last step, and that $H$ is positive semidefinite in the last step.
Similarly,
\begin{align}
    |\trace{[\cT(\rho)]} - \trace{[\mathcal{P}(\rho)]}| &= \mathcal{O}\left(\frac{\beta^2 L^2 }{r} e^{\beta L/r} \right)
\end{align} 

We choose $r = O\left(\tfrac{\beta^2 L^2}{\epsilon}\right)$ such that the error on the RHS above and thus also in \cref{eq:err-obs-trotter} are both at most $O(\epsilon)$. 

In our case, each term $H_i$ in the Hamiltonian is of the form $h\otimes I$ where $h$ is a $k$-local interaction, meaning that we have the conditions as in \cref{lem:qpd-cost-repeated} where we repeatedly apply the same local map to different subsets of qubits.

\section{Further implementation details}

For the demonstration on hardware shown in \cref{HeisExpt}, we also were able to run circuits for the fifth Trotter step.
However, due to retirement of the device (5-qubit IBM systems) in November 2023, only 20000 QPD samples were collected, where the circuit corresponding to each sample was run for 512 shots.
The energy estimate we computed from this data is $-0.275$, which should be compared with the ideal value of $-0.269$ and the simulated value of $-0.275\pm0.010$.

\end{document}